\documentclass{article}

\usepackage{microtype}
\usepackage{graphicx}
\usepackage{subcaption}
\usepackage{booktabs} 

\usepackage{amsthm,amssymb,amsmath}
\usepackage{caption}
\usepackage{algorithm,algorithmic}
\usepackage{wrapfig}
\usepackage{graphicx}
\usepackage{tikz-cd}
\usepackage{url}
\usepackage{bbold}
\usepackage{enumitem}

\DeclareMathOperator*{\argmax}{arg\,max}
\DeclareMathOperator*{\argmin}{arg\,min}
\newcommand{\R}{\mathbb{R}}
\newcommand{\E}{\mathbb{E}}
\newcommand{\N}{\mathbb{N}}

\theoremstyle{plain}
\newtheorem{theorem}{Theorem}[section]

\newtheorem{corollary}[theorem]{Corollary}
\newtheorem{claim}{Claim}

\usepackage{graphicx}
\usepackage{tikz-cd}

\usepackage{hyperref}

\usepackage[accepted]{icml2026}

\usepackage{amsmath}
\usepackage{amssymb}
\usepackage{mathtools}
\usepackage{amsthm}

\usepackage[capitalize,noabbrev]{cleveref}

\theoremstyle{plain}
\theoremstyle{definition}

\theoremstyle{remark}

\usepackage[textsize=tiny]{todonotes}

\icmltitlerunning{Relevance-Based Embeddings}

\begin{document}

\twocolumn[
  \icmltitle{Relevance-Based Embeddings: \\ Lightweight Candidate Retrieval via Heavy-Ranker Calls}



  \icmlsetsymbol{equal}{*}

  \begin{icmlauthorlist}
    \icmlauthor{Kirill Shevkunov}{yandex}
    \icmlauthor{Andrey Ploskonosov}{yandex}
    \icmlauthor{Liudmila Prokhorenkova}{yandex}
  \end{icmlauthorlist}

  \icmlaffiliation{yandex}{Yandex}
  \icmlcorrespondingauthor{Kirill Shevkunov}{shevkunov@yandex-team.ru}
  \icmlcorrespondingauthor{Liudmila Prokhorenkova}{ostroumova-la@yandex-team.ru}

  \icmlkeywords{relevance search, relevance-based embeddings, candidate selection, information retrieval, recommender system}

  \vskip 0.3in
]



\printAffiliationsAndNotice{}  

\begin{abstract}
In many machine learning applications, the most relevant items for a query should be efficiently retrieved. The relevance function is usually an expensive similarity model, making the exhaustive search infeasible. A typical solution is to train another model that separately embeds queries and items to a vector space, where similarity is defined via the dot product or cosine similarity. This allows one to search the relevant items through fast approximate nearest neighbor search at the cost of some reduction in quality. To compensate for this reduction, the found items (candidates) are re-ranked by the expensive ranking model. In this paper, we investigate an alternative approach to candidate selection that utilizes the scores of the expensive model to improve the representations of queries and items. The idea is to describe each query (item) by its relevance to a set of support items (queries) and use these new representations to obtain query (item) embeddings. We theoretically prove that such embeddings are powerful enough to approximate any complex similarity model (under mild conditions). We also investigate the choice of support items, which is a crucial ingredient of the proposed approach. The experiments on diverse academic and production datasets illustrate the power of our method.
\end{abstract}

\section{Introduction}

Finding the most relevant element (item) $i$ to a query $q$ among a large set of candidates $I$ is a key task for a wide range of machine learning problems, for example, information retrieval, recommender systems, question-answering systems, or search engines. In such problems, the final score (relevance) is often predicted by a pairwise function $R: I \times Q \rightarrow \mathbb{R}$, where $Q$ is a query space and $R$ approximates some ground truth relevance such as click probability or time spent. Depending on the task, the relevance function $R$ can utilize query attributes (e.g., the text of the query or a set of numerical features describing the user, such as age, time spent on the service, etc.), item attributes, or attributes describing the query-item pair (e.g., statistics based on counts of each query term in the document in information retrieval tasks).

The problem of retrieving the most relevant item for a query $q$ can be written as $\argmax_{i \in I} R(i, q)$. For practical applications, it is usually required to return not one but $K$ best items (for directly displaying to the user or further re-ranking). Most recommender systems have large item spaces $I$ (millions to hundreds of millions), so exhaustive search is infeasible. This problem is often solved by training an auxiliary model $\tilde R$, called a Siamese, two-tower, or dual encoder (DE), in which late binding is used: $\tilde R(i, q) = S(F_I(i), F_Q(q))$, where $F_I: I \rightarrow \mathbb{R}^d$, $F_Q: Q \rightarrow \mathbb{R}^d$, and $S$ is some lightweight similarity measure, usually dot product or cosine similarity.

While a lot of effort has been put into developing dual-encoder models, the cross-encoder (CE) ones are generally more powerful~\citep{wu2019scalable,yadav2022efficient}. Moreover, in practice it is typical to also have features that describe a query-item pair: e.g., counts of query terms in the document (information retrieval), information about previous user-item interactions (recommender systems), and so on. Such features cannot be used by dual encoders thus limiting their expressivity.

An alternative approach suggested by \citet{yadav2022efficient} is to approximate the relevance of a given query to all the items using the relevance of this query to a fixed set of randomly chosen support items. In more detail, the authors apply the matrix factorization to the query-item relevance matrix to represent it as a product of its submatrix containing only a few columns (relevances to support items) and some other, explicitly computable.

Motivated by the above-mentioned work, we propose and analyze the concept of \emph{relevance-based embeddings} (RBE). The main idea is to describe queries by their relevance to some pre-selected support items and describe items by their relevance to some support queries. Then, such representations can be used in various ways: as in~\citet{yadav2022efficient}, they can be multiplied by a certain matrix to obtain relevance approximations, or they can be passed into a neural network to potentially obtain better approximations (trained for a desired loss), or they can be additionally combined with the original node features to get even more powerful embeddings. An important aspect of our approach is how to properly choose support elements. Previous studies~\citep{morozov2019relevance,yadav2022efficient} sampled them uniformly at random, while we show that there is significant room for improvement. We investigate different options: from simple heuristics (e.g., popular or diverse elements) to methods directly optimizing the relevance approximation quality. Surprisingly, even very simple strategies like clustering the elements and choosing the cluster centers as support items already give significant improvements that can be further enhanced by more advanced and theoretically justified strategies.

From a theoretical perspective, we prove that (under mild conditions) relevance-based embeddings are powerful enough to approximate any continuous similarity function. In particular, when the similarity function utilizes pairwise features, dual encoders based on the individual user and item features inevitably lose this information; however, relevance-based representations contain this information and thus can approximate the desired function.

From a practical perspective, our approach uses a significantly smaller number of trainable parameters (compared to the baseline dual encoder candidate selection model) and does not require any feature engineering since the heavy ranker's predictions are used as features. In other words, we propose a method for an effective and efficient candidate selection via any black-box ranker model.

To evaluate the performance of RBE, we conduct experiments on various textual and recommendation datasets. We show that RBE outperforms the methods from~\citet{yadav2022efficient} and~\citet{yadav2024adacur} on publicly available academic datasets. We also demonstrate the advantages of RBE over strong dual encoders in production scenarios by testing RBE as part of two recommendation services.

\section{Related Work}

In this section, we discuss research areas and representative papers related to our study.

The \textbf{relevance retrieval problem} is widespread in the development of information retrieval systems~\citep{kowalski2007information} such as text and image search engines~\citep{huang2013learning, gordo2016deep}, entertainment recommender systems~\citep{covington2016deep}, question answering systems~\citep{karpukhin2020dense}, e-commerce systems~\citep{yu2018modelling}, and other practical applications.

Usually, such problems are solved by learning \textbf{query and item embeddings} in a common space and then searching for approximate nearest items in this space, followed by re-ranking via a heavier ranker. \citet{huang2013learning,covington2016deep} explicitly use this approach, offering two-tower models (a.k.a.~dual encoders). There are also alternatives to dual encoders that use, e.g., BM25 scores applicable to texts~\citep{logeswaran2019zero,zhang2021understanding} or other approaches~\citep{humeau2019poly, luan2021sparse}. However, there is usually a trade-off between complexity and quality. One can also train the dual encoder by distilling a heavier ranker model~\citep{wu2019scalable, hofstatter2020improving, lu2020twinbert, qu2020rocketqa, liu2021trans}. Although these works aim at training a lightweight ranker using a heavy ranker, they differ from our approach since distillation means that there are still two heavy (comparable in orders of magnitude of trainable parameters) models, unlike our lightweight approach. Also, distillation-based approaches do not provide theoretical guarantees similar to those obtained in our work. In particular, if relevance is highly determined by pairwise query-item features, the obtained dual encoder can be weak since it is not able to capture this important signal.

As for the \textbf{nearest neighbor search} in a common query-item space, a wide variety of algorithms exist, including locality-sensitive hashing (LSH)~\citep{indyk1998approximate, andoni2008near}, partition trees~\citep{bentley1975multidimensional, dasgupta2008random, dasgupta2013randomized}, and
similarity graphs~\citep{navarro2002searching}.  LSH- and tree-based methods provide strong theoretical guarantees, but it has been shown that graph-based methods usually perform better~\citep{malkov2018efficient,aumuller2020ann}, which explains their widespread use in practical applications.

Another research direction is methods that combine nearest neighbor search with heavy-ranker calls~\citep{morozov2019relevance, chen2022approximate, xu2025bi} instead of separately embedding queries and items in a common space where the search for the nearest items can be efficiently performed. Such methods show better quality in comparison with separate embeddings, however, their practical application may be limited since they require a significant change in the structure of the search index. Moreover, practical applications often imply search with conditions (filters) that are challenging to incorporate in these methods.

A paper by~\citet{yadav2022efficient} is the most relevant for our research. The idea is to apply the matrix factorization to the query-item relevance matrix in order to represent it as a product of its submatrix containing only a few columns (relevances for random support items) and some other, explicitly computable (see Section~\ref{sec:cur} for the details). Despite the simplicity of the idea and implementation, the authors have shown in detail the superiority of their algorithm over more complex approaches, such as dual encoders. Our work is motivated by this study: we show that the approach of~\citet{yadav2022efficient} has theoretical guarantees and suggest several improvements that significantly boost the performance. As an enhancement of the original approach, in another paper, \citet{yadav2023adaptive} proposed the idea of selecting support items for each query, but the time complexity of each query processing becomes linear in the number of elements, which makes the approach infeasible in most practical applications. On the contrary, our support items selection is performed in the pre-processing stage and does not increase the query~time. 

Finally, the most recent paper by~\citet{yadav2024adacur} proposes an alternative solution to the problem. Their AXN algorithm dynamically learns the difference between the DE and CE predictions for each query independently (at the query time). Learning the difference is carried out through iteratively choosing a set of anchor (support) elements, computing CE scores for them, and learning linear regression with embeddings of these elements as features and CE scores as targets. Some disadvantages of this method are that it requires previously trained embeddings of queries and documents and it has increased query processing time due to the need for iterative refining of all item relevances for each query. For completeness of our study, we use the AXN algorithm as one of our baselines.

\section{Relevance-Based Embeddings}\label{sec:RBE}

In general, the information (attributes) used to compute the item-to-query relevances $R(i, q)$ can be divided into three types: depending only on the query $q$, only on the item $i$, and on both of them. The key problem when constructing separate embeddings of items and queries in the common space (that can be used for searching for the nearest elements) is the inability to use information that depends on both query and item, which lowers the quality of relevance search. 

To address the above-mentioned issue, we introduce relevance-based representations that describe each query by its relevance to a pre-selected set of items and, vice-versa, each item by its relevance to a pre-selected set of queries. We prove that, under certain conditions, any relevance function can be well estimated using only such individual vectors. This expressive power is a clear advantage of the proposed relevance-based embeddings over conventional dual-encoder models.

\subsection{Preliminaries}

Let $Q$ and $I$ be compact topological spaces of queries and items, respectively. Assume that we are given a relevance function $R: I \times Q \to \R$. In practice, $R$ is a complex relevance model that can be computationally expensive and rely on pairwise features.

Let $S_I \subset I$ and $S_Q \subset Q$ be some finite ordered sets of \emph{support items} and \emph{support queries}: ${S_I = \{i_1, \ldots, i_m\}}$, ${S_Q = \{q_1, \ldots, q_n\}}.$ Let $R(i, S_Q)$ be a \emph{relevance vector} of the item $i$ w.r.t.~the set of support queries $S_Q$: $R(i, S_Q) = (R(i, q_1),\ldots, R(i, q_n)).$ Similarly, $R(S_I, q)$ is a relevance vector of the query $q$ w.r.t.~the set of support items $S_I$: $ R(S_I, q) = (R(i_1, q),\ldots, R(i_m, q)).$ By $R(S_I, S_Q)$ we denote a relevance matrix composed in a similar way.

\subsection{CUR Approximation}\label{sec:cur}

Relevance vectors can be utilized in different ways and one of the possible approaches is to use the CUR decomposition (this approach was used by~\citet{yadav2022efficient}). Using our notation, the relevance $R(i,q)$ is approximated as:
\begin{equation}\label{eq:CUR}
\tilde R(i, q) = \langle R(i, S_{Q}) {\times} \mathrm{pinv}(R(S_I, S_Q)) , R(S_I, q) \rangle ,
\end{equation}
where $\mathrm{pinv}(X)$ is the pseudo-inverse matrix of $X$. As support queries,~\citet{yadav2022efficient} take the set of train queries: $S_Q = Q_{train} \subset Q$.

Regarding the computational complexity, we note that the CUR approximation requires computing the matrix $R(I,S_Q)$ which takes $M \cdot |S_Q| = M \cdot |Q_{train}|$ heavy-ranker calls, where $M$ is the total number of items in a database, which can be infeasible for large databases.

Our first theoretical result provides guarantees for the CUR approximation~\eqref{eq:CUR}. Namely, we show that its regularized version approximates the true relevance arbitrarily closely in $L_2$. Formally, let
$
\mathrm{pinv}_\lambda(A) = (A^TA + \lambda E)^{-1}A^T
$
with $E$ being the identity matrix. Then, the regularized CUR approximation CUR$_{\lambda}$ is defined by~\eqref{eq:CUR} with $\mathrm{pinv}_\lambda$ instead of $\mathrm{pinv}$. For CUR$_{\lambda}$, we prove the following result (see Appendix~\ref{app:CUR} for the proof).

\vspace{5pt}

\begin{theorem}~\label{thm:cur}
Suppose that $I$ and $Q$ are equipped with the structure of a measure space and the integral of $R^4(i, q)$ over $I \times Q$ is finite. Then, the CUR$_{\lambda}$ approximation of $R$ can be chosen arbitrarily close to the true $R$ in $L_2(I \times Q)$ provided enough independently sampled support items and queries and sufficiently small~$\lambda$.
\end{theorem}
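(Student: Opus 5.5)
We view $R$ as the kernel of a Hilbert--Schmidt operator and show that the regularized CUR$_\lambda$ formula is a Monte Carlo discretization of a Tikhonov-regularized projection. Since $\int R^4<\infty$ and the measures are finite, $R\in L_2(I\times Q)$. We normalize the measures to probabilities $\mu_I,\mu_Q$ and sample support items and queries i.i.d.\ from them. Let $T:L_2(Q)\to L_2(I)$ be $(Tf)(i)=\int R(i,q)f(q)\,d\mu_Q(q)$, which is Hilbert--Schmidt with singular system $(\sigma_k,u_k,v_k)$.

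First, rewrite the estimator. With $C=R(I,S_Q)$ (rows $R(i,S_Q)$), $W=R(S_I,S_Q)$, $U=R(S_I,q)$, we have $\mathrm{pinv}_\lambda(W)=(W^TW+\lambda E)^{-1}W^T$. Rescale $\lambda$ by $mn$, which is harmless since $\lambda$ is free, so that the empirical quantities $\frac1{mn}W^TW$ and $\frac1{mn}W^TU$ appear. The key observation is that $\frac1{mn}W^TW$ is an empirical version of the $n\times n$ matrix $G_{jl}=\int_I R(i,q_j)R(i,q_l)\,d\mu_I$. Likewise $\frac1{mn}W^TU$ is an empirical version of $g_j(q)=\int_I R(i,q_j)R(i,q)\,d\mu_I$. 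The plan is to compare the estimator with the idealized $\hat R(i,q)=\langle R(i,S_Q)(G/n+\lambda E)^{-1}, g(q)/n\rangle$, in which the integration over $I$ is exact.

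\textbf{Step 1 (limit $m\to\infty$, fixed $S_Q$).} By the strong LLN, with $\int R^4<\infty$ guaranteeing integrability of $R(i,q_j)R(i,q)$ and finite variances for $L_2$ convergence in $q$, the empirical matrices converge to $G/n$ and $g/n$. Because $G/n+\lambda E\succeq\lambda E$, inversion is uniformly Lipschitz, so CUR$_\lambda\to\hat R$ in $L_2(I\times Q)$ in probability. The fourth-moment assumption is used here: it makes $\E\|R(\cdot,q_j)R(\cdot,q)\|^2$ finite, which controls the $L_2$ error in $q$ by Chebyshev.

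\textbf{Step 2 (limit $n\to\infty$).} $\hat R$ is a kernel ridge regression in the sample of queries. Writing $C_n=\frac1n\sum_j R(\cdot,q_j)\otimes R(\cdot,q_j)$, the empirical estimate of $TT^*$, the push-through identity gives
\[
\hat R(\cdot,q)=C_n(C_n+\lambda)^{-1}R(\cdot,q)
\]
as a function of $i$. Again by the LLN in the Hilbert--Schmidt norm, which needs $\E\|R(\cdot,q)\|^4<\infty$ and hence the $R^4$ assumption, $C_n\to TT^*$. Then $C_n(C_n+\lambda)^{-1}\to TT^*(TT^*+\lambda)^{-1}$ in operator norm, with a Lipschitz constant depending on $\lambda$.

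\textbf{Step 3 ($\lambda\to0$).} Let $P_\lambda=TT^*(TT^*+\lambda)^{-1}$, which acts as $\sigma_k^2/(\sigma_k^2+\lambda)$ on $u_k$. Expanding gives
\[
\|R-P_\lambda R\|^2_{L_2}=\sum_k\sigma_k^2\bigl(\lambda/(\sigma_k^2+\lambda)\bigr)^2,
\]
and this tends to $0$ by dominated convergence since $\sum\sigma_k^2<\infty$.

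Combining the three steps: given $\varepsilon$, choose $\lambda$ from Step 3, then $n$, then $m$, so that the total error is below $\varepsilon$ with probability close to one. "Can be chosen" then follows because a realization exists. The main obstacle is the order of limits together with the $\lambda$-dependence of the Lipschitz constants in Steps 1--2, which forces the fixed order $\lambda$, then $n$, then $m$. The other delicate point is to justify cleanly the operator rewriting in Step 2, i.e.\ the identity $A^T(AA^T+\lambda)^{-1}=(A^TA+\lambda)^{-1}A^T$ applied with the sampling operator.
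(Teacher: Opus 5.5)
Your argument is correct, but it takes a genuinely different route from the paper.

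The paper reads CUR$_\lambda$ as ridge regression of each item function $r_i\in L_2(Q)$ onto the support-item functions $r_{i_t}$, with the $L_2(Q)$ inner products estimated over the support queries. Its key lemma is a nearest-neighbour covering statement (Step~2, Substeps~2.1--2.4): i.i.d.\ support items eventually give $\E_i\min_t\|r_i-r_{i_t}\|^2<\varepsilon$. A one-hot coefficient vector with $\lambda=\varepsilon$ then bounds the exact ridge error by $\lambda+\varepsilon$. Step~4 lets the number of support queries grow, and this is where $\int R^4<\infty$ enters.

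You instead use push-through to regress each query function $r_q\in L_2(I)$ onto the support queries. You make the Monte Carlo over support items the innermost limit, and you replace the covering lemma by the law of large numbers $C_n\to TT^*$ plus spectral calculus. The Schmidt decomposition, which the paper uses only to obtain a separable subspace, becomes the engine of your proof.

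Your route gives quantitative control that the paper's lacks. The bias is exactly $\sum_k\sigma_k^2(\lambda/(\sigma_k^2+\lambda))^2$. Your Step~2 error is at most $\lambda^{-1}\|C_n-TT^*\|\,\|R\|$, with $\E\|C_n-TT^*\|_{\mathrm{HS}}^2\le n^{-1}\E_q\|r_q\|^4\le n^{-1}\int R^4$. So rates can be read off, whereas covering in infinite dimensions yields none. Computing that Hilbert--Schmidt bound directly as a double integral via Fubini also disposes of the measurability issues that you leave implicit (the paper handles them in Substeps~2.1--2.2).

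The paper's route, in turn, needs no spectral information, only that the support items cover the item space. That is the intuition behind the clustering and diversity heuristics of Section~\ref{section-key}. Its order of limits (few fixed support items, then many support queries) also matches the AnnCUR regime $|S_I|=100$, $S_Q=Q_{train}$. Because the formula is symmetric, you could recover that order by rerunning your argument with $T^*T$ and the support items as the regression basis.

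Finally, state explicitly that your $\lambda$ is the regularizer in the formula divided by $mn$, so ``sufficiently small'' refers to the normalized parameter, not the raw one. The paper does the same implicitly: its Step~4 Gram matrix $\hat G$ is treated as an average over the support queries rather than the raw $R(S_I,S_Q)R(S_I,S_Q)^T$.
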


Our result gives theoretical support for the good performance of the CUR decomposition demonstrated by~\citet{yadav2022efficient}. In the next section, we show that stronger theoretical guarantees can be provided if we allow transforming query and item vectors into more powerful embeddings.

\subsection{Relevance-Based Embeddings}\label{subsec:rbe}

In this section, we propose extending the CUR-based approximation by allowing transformations of relevance vectors, e.g., with a neural architecture. With such embeddings, we prove a stronger result: that any continuous relevance function can be uniformly approximated.

We say that a function on $I$ is a \emph{relevance-based embedding} if it has a representation of the form $e_I(i) = f_I(R(i, S_Q), \theta_I)$ where $S_Q$ is a set of support queries and $f_I$ is a neural architecture with parameters $\theta_I$ which parametrizes a mapping $\R^n \to \R^d$. Analogously, $e_Q(q) = f_Q(R(S_I, q), \theta_Q)$ is a relevance-based embedding of the query $q$.

The following theorem holds (the proof can be found in Appendix~\ref{app:thm-proof} and the guarantees for the RBE approach on a sphere are discussed in Appendix~\ref{app:col-proof}).

\vspace{5pt}

\begin{theorem}\label{thm:rbe}
Let $I$ and $Q$ be compact topological spaces and $R: I \times Q \to \R$ be a continuous function. Then, $R$ can be uniformly approximated up to an arbitrarily small absolute error by a function $\tilde R(i, q)$:
\begin{equation}\label{eq:rbe}
\tilde R(i, q) = \langle\, f_I(R(i, S_Q), \theta_I),\, f_Q(R(S_I, q), \theta_Q)\, \rangle,
\end{equation}
where $S_I \subset I$ and $S_Q \subset Q$ are some finite sets of support items and queries and $f_I$, $f_Q$ are neural architectures with the universal-approximation property (e.g., MLPs).
\end{theorem}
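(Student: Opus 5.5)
The plan is to reduce the statement to a separable approximation of $R$ in which each factor is a continuous function of the corresponding relevance vector, and then to replace those functions by neural networks.

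\textbf{Step 1 (separable approximation).} Fix $\varepsilon>0$. Since $I\times Q$ is compact and $R$ is continuous, $R$ is uniformly continuous in the sense of the product uniformity. Concretely, for each $q\in Q$ there is an open neighborhood $U_q$ such that
\[
\sup_{i\in I}|R(i,q)-R(i,q')|<\varepsilon \quad \text{for all } q'\in U_q .
\]
This is the standard tube-lemma argument using compactness of $I$. By compactness of $Q$, finitely many such neighborhoods $U_{q_1},\dots,U_{q_n}$ cover $Q$. We set $S_Q=\{q_1,\dots,q_n\}$.

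If $Q$ is Hausdorff (hence normal), we take a continuous partition of unity $\{\phi_k\}$ subordinate to this cover. Then
\[
\Bigl|R(i,q)-\sum_k R(i,q_k)\,\phi_k(q)\Bigr|<\varepsilon
\]
uniformly in $(i,q)$. This already has the form $\langle g_I(R(i,S_Q)),\, h(q)\rangle$ with $g_I=\mathrm{id}$ and $h(q)=(\phi_1(q),\dots,\phi_n(q))$.

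\textbf{Step 2 (expressing $h$ through $R(S_I,q)$).} This is the main obstacle. The factor $h(q)$ must depend on $q$ only through $R(S_I,q)$. I would instead build the partition of unity in the pseudometric
\[
d(q,q')=\sup_{i\in I}|R(i,q)-R(i,q')|,
\]
which is continuous on $Q\times Q$. This works because $R(\cdot,q)$ ranges over a compact subset of $C(I)$.

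Next, approximate $d$ by a finite-dimensional version. Choose a finite $\varepsilon$-net $S_I=\{i_1,\dots,i_m\}$ for the pseudometric $\rho(i,i')=\sup_q|R(i,q)-R(i',q)|$ on $I$; it exists by the symmetric compactness argument. Then
\[
d_m(q,q')=\max_j |R(i_j,q)-R(i_j,q')|
\]
satisfies $|d-d_m|\le 2\varepsilon$. Now define
\[
\psi_k(q)=\max\bigl(0,\,4\varepsilon - d_m(q,q_k)\bigr), \qquad \phi_k=\frac{\psi_k}{\sum_l\psi_l}.
\]
Choose $S_Q$ as a $2\varepsilon$-net for $d$, so the denominator is positive: every $q$ has some $k$ with $d(q,q_k)<2\varepsilon$, hence $d_m(q,q_k)<2\varepsilon$ and $\psi_k(q)>0$. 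Whenever $\phi_k(q)\neq0$ we have $d(q,q_k)\le 6\varepsilon$. Therefore
\[
\Bigl|R(i,q)-\sum_k R(i,q_k)\phi_k(q)\Bigr|\le 6\varepsilon .
\]
Each $\phi_k(q)=H_k(R(S_I,q))$ for an explicit continuous $H_k:\R^m\to\R$, continuous on a neighborhood of the compact image $R(S_I,Q)$ where the denominator stays bounded away from $0$. This construction avoids any Hausdorff assumption.

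\textbf{Step 3 (neural replacement).} The images $K_I=R(I,S_Q)\subset\R^n$ and $K_Q=R(S_I,Q)\subset\R^m$ are compact. Take $d=n$ and $g_I=\mathrm{id}$ on $K_I$, and $H=(H_k)$ on $K_Q$. By the universal approximation property, pick $f_I,f_Q$ with
\[
\|f_I-\mathrm{id}\|_{K_I}<\delta, \qquad \|f_Q-H\|_{K_Q}<\delta .
\]
Both factors are bounded on these compacts: $|R|\le M$ and $H$ takes values in the simplex. The inner-product error is then at most
\[
\delta\bigl(\sqrt{n}M+\sqrt{n}\bigr)+n\delta^2 .
\]
Choosing $\delta$ small after $n$ and $m$ are fixed, the total error is $O(\varepsilon)$, which proves the theorem.
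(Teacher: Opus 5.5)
Your proof is correct, and it takes a genuinely different route from the paper's.

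\textbf{Common ingredient.} Both arguments rest on the relevance pseudometric. Your $d$ and $\rho$ are exactly the paper's $d_{Q_R}$ and $d_{I_R}$: the sup-norm distances between the $r_q\in C(I)$ and between the $r_i\in C(Q)$.

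\textbf{The paper's route.} It passes to the compact metric quotients $I_R$, $Q_R$ and shows that $R$ descends to a continuous function on $I_R\times Q_R$. It then applies Stone--Weierstrass to the algebra generated by the point-separating family $\{r_i\}\cup\{r_q\}$. The approximating polynomial $\sum_k c_k\prod_{i\in S^k_I}r_i\prod_{q\in S^k_Q}r_q$ factors as $\langle e(i),e(q)\rangle$, with polynomial functions of $R(i,S_Q)$ and $R(S_I,q)$. Here $S_I$ and $S_Q$ collect the items and queries that occur in the monomials.

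\textbf{Your route.} You build an explicit partition of unity in the finite-dimensional surrogate $d_m$. This gives the interpolant $\sum_k R(i,q_k)\phi_k(q)$ with error below $6\varepsilon$.

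\textbf{What your route buys.} It is constructive and quantitative. The sizes $|S_Q|$ and $|S_I|$ are covering numbers of $\{r_q\}\subset C(I)$ and $\{r_i\}\subset C(Q)$ at a scale of order $\varepsilon$, and the embedding dimension is $|S_Q|$. The item embedding can be the raw relevance vector, linear as in CUR, with all the nonlinearity on the query side. The pseudometric also does the job of the quotient construction directly.

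\textbf{What the paper's route buys.} It is shorter, and its target embeddings are smooth (polynomial) with no denominator to control. However, it gives no information on how large the support sets or the embedding dimension must be.
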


This theorem shows that the true relevance function can be uniformly approximated with arbitrary precision by some functions of the relevance vectors. Note that compared to Theorem~\ref{thm:cur}, Theorem~\ref{thm:rbe} provides a different type of guarantee. Trainable embeddings allow us to prove uniform convergence for any continuous function $R$.

\begin{figure}
\includegraphics[trim={3.8cm 0 1.5cm 0},clip,width=\linewidth]{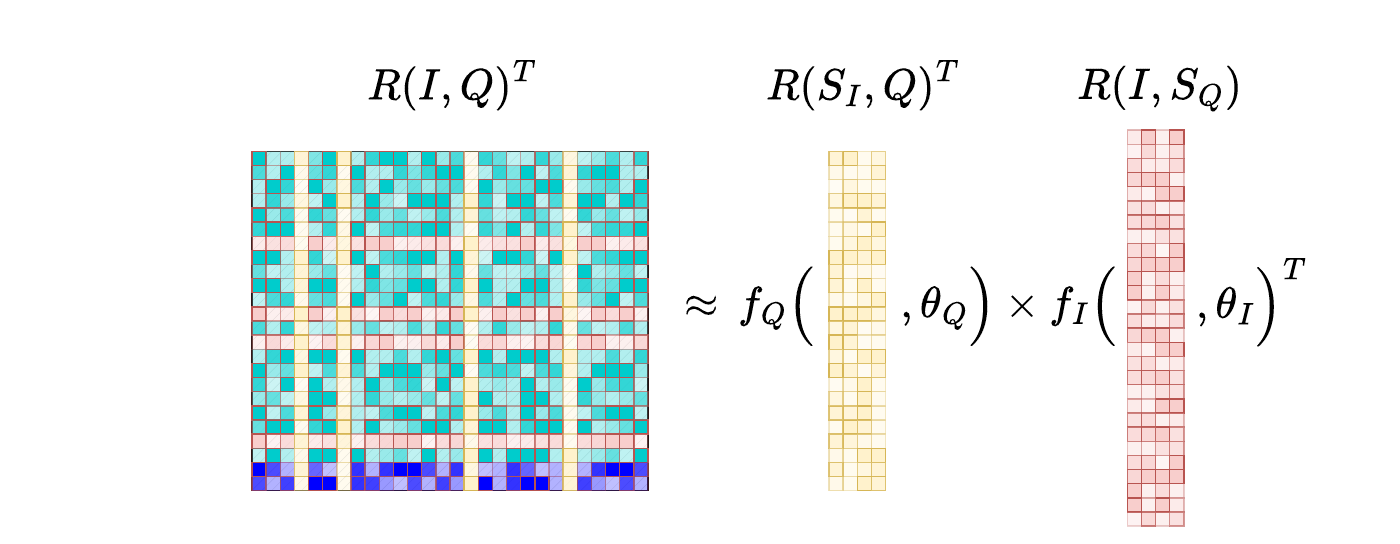}
\caption{RBE visualization: support queries are red, support items are yellow; test queries are blue, and their relevance scores for the support items are used to approximate the remaining values.}
\label{fig:rbe}
\end{figure}

The proposed RBE framework is visualized in Figure~\ref{fig:rbe}. Note that the CUR approximation from Section~\ref{sec:cur} fits our framework with $\theta_I := \mathrm{pinv}(R(S_I, S_Q))$ and $f_I(R(i, S_Q), \theta_I) := R(i, S_Q) \times \theta_I$, $f_Q(R(S_I, q), \theta_Q) := R(S_I, q)$. Extending the CUR approximation to arbitrary embeddings allows us to obtain even stronger approximation guarantees. Moreover, such trainable embeddings are more flexible as they can adapt to a particular quality function one aims to optimize. 

Regarding the computational complexity, the pre-processing step requires $M \cdot |S_Q| + N$ computations of the heavy ranker, where $N$ is the size of the training set (the number of query-item pairs used for training,\footnote{We assume that $N$ includes relevances of training queries to fixed support items.} this part is similar to the training of dual encoders). An advantage of RBE is that we may have a significantly smaller set of support queries $|S_Q| \ll |Q_{train}|$ while still utilizing the remaining queries for training the mappings $f_Q$ and $f_I$. Importantly, the number of heavy-ranker calls can be further reduced to just $O(N)$ if we replace the transformation $f_I(R(i, S_Q), \theta_I)$ with trainable embeddings of items $\theta_I(i) \in R^d$ (see Section~\ref{sec:rbeprac}). The latter observation is a significant advantage of RBE over the CUR approximation on large databases.

To sum up, the main outcome of our analysis is that even though the relevance function $R(\cdot,\cdot)$ is arbitrary (and, in particular, can rely on pairwise features only available for query-item pairs), it can be well approximated by relevance-based embeddings of individual users and items. 

\subsection{Selecting Support Items}\label{section-key}

Let us revisit the statement of Theorem~\ref{thm:rbe}. The theorem states that there exist such sets $S_Q$, $S_I$ that the relevance function $R$ can be effectively approximated by separate embedders $f_I$, $f_Q$. In the related works~\citep{morozov2019relevance, yadav2022efficient}, the selection of $S_I$ is random, implicitly assuming that the elements are in some sense equivalent. However, for example, when building a recommendation service, the popularity of different objects has a strongly skewed distribution, which is why more information is known about a small set of highly popular items than about a large set of unpopular ones. Thus, it is natural to assume that the choice of support items may have a significant effect on performance. We investigate this direction and compare several approaches from simple heuristics to more complex ones optimizing the approximation quality.

Our \textbf{heuristic strategies} include:\footnote{The approaches that require item representations can be applied to their relevance vectors.}
\begin{itemize}
\vspace{-6pt}
    \item \textbf{Random \,\,} Support items are chosen uniformly at random~\citep{morozov2019relevance,yadav2022efficient}. For better reproducibility, we also present the results of using the first $|S_I|$ items as support ones, assuming that the order of items is pseudorandom.
    \item \textbf{Popular \,\,} As mentioned above, a recommendation service usually has a small set of very popular elements that many users interact with. As a result, a lot of information can be collected from these interactions thus making the popular elements more informative. Since it is not always possible to get popularity explicitly, we consider the following surrogate: choose the objects with the highest average relevance for the training set.
    \item \textbf{Cluster centers \,\,} To select a representative subset of vectors, it is natural to cluster the vectors and choose cluster centers.  We consider various clustering algorithms and select the cluster centers as support elements. The number of clusters is set to the number of required support elements. 
    \item \textbf{Most diverse \,\,} This strategy is a greedy algorithm maximizing the minimum distance between the support elements. We first choose the element furthest from the center (by Euclidean distance) and then, at each step, an element is selected whose minimum distance to the current support elements is maximal.
\end{itemize}

\paragraph{$l_2$-greedy approach}

Let us now discuss a more theoretically justified approach that we call $l_2$-greedy. It aims at selecting the support elements that allow for a better approximation of the relevance matrix $R(i,q)$. In this strategy, we greedily select items so that the MSE error of the CUR approximation~\citep{mahoney2009cur} is minimized for the train queries.\footnote{We use the CUR approximation since it optimizes MSE and is deterministically defined as in~\eqref{eq:CUR}, while RBE requires fitting $f_I$ and $f_Q$ before estimating the approximation quality for each choice of the support elements.}

We note that the CUR approximation replaces every item with a linear combination of support items so that the MSE between the true relevances and their CUR approximations on the train queries is minimized (see Appendix~\ref{app:CUR}). Our goal is to optimize the overall MSE for all items, which is:
\begin{multline}\label{eq:greedy-opt}
\sum_{i} \| R(i, S_Q)^T{-}R(S_I,S_Q)^T \\ {\times}\mathrm{pinv}(R(S_I, S_Q)^T){\times}R(i, S_Q)^T   \|_2^2.
\end{multline}
We minimize this expression over all possible choices of $S_I$, $|S_I| = m$. We propose a greedy approach in which support items are selected one by one optimizing~\eqref{eq:greedy-opt} at each step. The implementation details can be found in Appendix~\ref{app:greedy}.

\paragraph{Computational complexity} In their default implementation, support items selection strategies \emph{popular}, \emph{cluster centers}, \emph{most diverse}, and \emph{$l_2$-greedy} require computing the relevance scores of all items to support queries that can be done in $O(M \cdot n)$. If this is infeasible, one can use downsampling to reduce the number of candidates for support items. Our preliminary experiments (see Appendix~\ref{scale}) show that even significant downsampling gives reasonable performance of the obtained support items. Moreover, for heuristic approaches, instead of the relevance vectors, one can use cheaper embeddings (e.g., the original feature vectors) when they are available. Also, for \emph{cluster centers}, there can be predefined clusters in the data. For instance, in recommendation services, it is typical that items are annotated with their categories that can be used as clusters without any additional cost.

\begin{table*}
    \caption{Support item selection applied to the CUR approximation; $\text{HitRate}(100)$ is reported (larger is better); we highlight the \textcolor{red}{\textbf{first}}, \textcolor{blue}{\textbf{second}}, and \textbf{third} best results.}
    \label{table-key}
    \centering
    \resizebox{\textwidth}{!}{%
    \begin{tabular}{lccccccccc}
    \toprule
        \multicolumn{1}{c}{\bf Support items}  &\multicolumn{1}{c}{\bf Yugioh} &\multicolumn{1}{c}{\bf P.Wrest.} &\multicolumn{1}{c}{\bf StarTrek} &\multicolumn{1}{c}{\bf Dr.Who} &\multicolumn{1}{c}{\bf Military} &\multicolumn{1}{c}{\bf RecGames2} &\multicolumn{1}{c}{\bf RecGames1}  &\multicolumn{1}{c}{\bf RecMusic}  &\multicolumn{1}{c}{\bf QA.Small} \\
    \toprule
random (i.e. AnnCUR) & 0.4724 & 0.4280 & 0.2287 & 0.1919 & 0.2455 & 0.6697 & 0.5842 & 0.1478 & \textbf{0.5828} \\
first 100 & 0.4845 & 0.4182 & 0.2489 & 0.1975 & 0.2599 & 0.6490 & 0.5609 & 0.1551 & 0.5491 \\
popular & 0.2429 & 0.3001 & 0.1154 & 0.1197 & 0.1907 & \textcolor{red}{\textbf{0.7623}} & \textcolor{red}{\textbf{0.6695}} & 0.1422 & 0.5536\\
KMeans & 0.5083 & \textbf{0.4850} & 0.3226 & \textbf{0.2517} & \textcolor{blue}{\textbf{0.3042}}  & 0.7070 & 0.6184 & \textbf{0.1661} & 0.5578 \\
BisectingKMeans & 0.4825 & 0.4592 & 0.2839 & 0.2159 & 0.2752 & 0.7035 & 0.6213 & 0.1483 & 0.5394 \\
MiniBatchKMeans & 0.5077 & 0.4737 & 0.2912 & 0.2365 & \textbf{0.2826} & 0.7033 & 0.5981 & \textcolor{blue}{\textbf{0.1721}} & 0.5529 \\
AgglomerativeClustering & \textbf{0.5105} & \textcolor{blue}{\textbf{0.4911}} & \textbf{0.3264} & \textcolor{blue}{\textbf{0.2531}} & 0.2448 & 0.7050 & \textbf{0.6265} & 0.1557 & 0.5666 \\
SpectralCoclustering & 0.4618 & 0.4443 & 0.2540 & 0.2076 & 0.2551 & 0.6998 & 0.6094 & 0.1594 & 0.5415\\
SpectralBiclustering & 0.4654 & 0.4708 & 0.2628 & 0.1845 & 0.2533 & \textcolor{blue}{\textbf{0.7409}} & 0.5972 & 0.1607 &  0.5358 \\
SpectralClusteringNN & 0.5087 &  0.4690 & 0.2742 & 0.2048 & 0.2507  & 0.6936 & 0.5740 & \textcolor{red}{\textbf{0.2343}} & 0.5741\\
most diverse & \textcolor{blue}{\textbf{0.5333}} & 0.4290 & \textcolor{blue}{\textbf{0.3325}} & 0.2278 & 0.2483 & 0.6504 & 0.6182 & 0.1329 & \textcolor{blue}{\textbf{0.5925}} \\

$l_2$-greedy & \textcolor{red}{\textbf{0.5618}} & \textcolor{red}{\textbf{0.5119}} & \textcolor{red}{\textbf{0.3677}} & \textcolor{red}{\textbf{0.2960}} & \textcolor{red}{\textbf{0.3357}} & \textbf{0.7197} & \textcolor{blue}{\textbf{0.6565}} & 0.1478 & \textcolor{red}{\textbf{0.6100}} \\

    \toprule
    \end{tabular}
    }\end{table*}

\subsection{Additional Practical Considerations}\label{sec:rbeprac}

\paragraph{Dynamic set of items} We note that RBE can naturally handle scenarios where the set of items frequently changes. The embeddings $f_I(R(i, S_Q), \theta_I)$ can be easily computed for new items without re-training the embedding model $f_I$ (similarly to feature-based dual encoders). On the other hand, if the set of items $I$ is finite or changes not so frequently (e.g., the set of movies currently available in a recommendation service), the transformation $f_I(R(i, S_Q), \theta_I)$ can be replaced with trainable embeddings $\theta_I(i) \in R^d$. In contrast to items, the query set $Q$ cannot be assumed finite: queries can be represented by texts of unlimited length or characterized by real-valued features.

\paragraph{Improving approximation quality} It is natural to assume that the heavy ranker $R$ is the most expensive component in terms of computational complexity. Thus, during the computation of $\tilde R$, we are mainly limited by the sizes of the support sets $S_Q$ and $S_I$. In contrast, computing $f_Q$, $f_I$, or the dot product between them is assumed to be significantly cheaper. Thus, $f_Q$ and $f_I$ may embed the relevance vectors in a higher-dimensional space. In particular, this may help to eliminate the disadvantages of the dot product, compared with other ways of measuring distances or similarities between objects~\citep{shevkunov2021overlapping}.

While RBE has theoretical guarantees, they hold in the limit, when the sets of support queries and items are sufficiently large. In practice, to improve the approximation performance and reduce the number of heavy-ranker calls per query, the mapping $f_Q$ (or $f_I$) could be extended by enriching it with the features of the original query (or item).

\paragraph{Scalability}~\label{sec:scalability}
The cost of the pre-processing and the ways to scale it are discussed in Section~\ref{subsec:rbe}. At the inference, we need to compute the query representation, which requires $m$ relevance computations. Then, the inference is similar to dual encoders: the item representations are precomputed and placed in the approximate nearest neighbors index like HNSW~\citep{malkov2018efficient}, which takes the embedding of the query as input. These $m$ additional computations are taken into account in our experiments: when comparing with dual encoders, we reduce the final re-ranking budget by $m$.

\section{Experimental Setup}

For all datasets that we use in this work, there is a heavy ranker that provides relevance. With this ranker, we build the table ($R: I \times Q \rightarrow \mathbb{R}$) of the relevance scores. The task is to find the most relevant items for a given query. We denote the predicted scores by $\tilde R$. The quality is evaluated as $\text{HitRate}(P, T)$, which equals\footnote{Our $\text{HitRate}(k_r, k)$ is equivalent to Top-$k$-Recall@$k_r$ from~\citet{yadav2022efficient}.}
\[
\frac{1}{|Q_{test}|}\sum\limits_{q \in Q_{test}} \frac{|\text{Best}_P(\tilde{R}, q) \cap \text{Best}_T(R, q)|}{|\text{Best}_T(R, q)|}.
\]
Here $\text{Best}_K(\mathcal{R}, q) \subset I$ is defined as the set of $K$ items $i_1$, \ldots, $i_K$ with the highest relevances $\mathcal{R}(i_1, q)$, \ldots, $\mathcal{R}(i_K, q)$ to a given query $q \in Q$; $Q_{test}$ is a set of test queries, $Q_{train} \bigsqcup Q_{test} = Q$. For all our experiments, ${|Q_{test}| \approx 0.3 |Q|}$, $|S_I| = 100$, $S_Q = Q_{train}$. We denote $\,\text{HitRate}(K) := \text{HitRate}(K, K)$.

\subsection{Datasets}

We evaluate our approach on the following datasets. The Zero-Shot Entity Linking (\textbf{ZESHEL}) dataset is constructed by~\citet{logeswaran2019zero} from Wikia. Following~\citet{yadav2022efficient}, we run the experiments on five domains from ZESHEL and use the cross-encoder trained by~\citet{yadav2022efficient} as a heavy ranker $R$. To additionally cover the question-answering domain, we conducted experiments on the  MS MARCO~\citep{NguyenRSGTMD16} dataset provided by Hugging Face. As a heavy ranker, we use \textit{all-mpnet-base-v2} from the SentenceTransformers library~\citep{reimers-2019-sentence-bert}, which is trained, among other datasets, on the MS MARCO data.\footnote{Additional experiments with another CE can be found in Appendix~\ref{app:cece}.} We took $\sim$10K test queries and 0.8M passages corresponding to them (\textbf{QA}). For the experiments in Table~\ref{table-key}, we used the smaller (\textbf{QA.Small}) version with 82K passages (only the test passages).

We also collected datasets from Yandex Games and Yandex Music services providing recommendations of items to users. In both services, there is a heavy ranker $R$ that is the CatBoost gradient boosting model~\citep{prokhorenkova2018catboost} trained on a large set of external data and features, including those provided by multiple two-tower neural networks. There are also strong dual encoders trained on large data with rich features. A smaller subsample of this data was used to train RBE, since we believe that RBE is able to show good quality even on a significantly smaller size of the training data. For Yandex Games, there are two datasets \textbf{RecGames1} and \textbf{RecGames2} that differ by an objective used to train the heavy ranker. We refer to the Yandex Music dataset as \textbf{RecMusic}. The detailed description of the datasets is given in Appendix~\ref{app:datasets}.

Overall, in our experiments, we aim at showing that relevance vectors are powerful query and item representations and that they can be used in a variety of tasks when an arbitrary relevance function for query-item pairs is available. For this, we conduct experiments covering a variety of dimensions: tasks (entity linking, question answering, recommendations); relevance models (different heavy rankers including neural networks and gradient boosting); setups (both reproducible academic benchmarks and in-the-wild production systems with heavily tuned dual encoders). Thus, our setup allows us to demonstrate the usefulness and applicability of the proposed approach in diverse scenarios.

\begin{table*}
    \caption{Evaluating neural relevance-based embeddings; $\text{HitRate}(100)$ is reported (larger is better).}
    \label{table-proj}
    \centering
    \resizebox{\textwidth}{!}{%
    \begin{tabular}{lccccccccc}
        \toprule
        \multicolumn{1}{c}{\bf Model}  &\multicolumn{1}{c}{\bf Yugioh} &\multicolumn{1}{c}{\bf P.Wrest.} &\multicolumn{1}{c}{\bf StarTrek} &\multicolumn{1}{c}{\bf Dr.Who} &\multicolumn{1}{c}{\bf Military} &\multicolumn{1}{c}{\bf RecGames2} &\multicolumn{1}{c}{\bf RecGames1} &\multicolumn{1}{c}{\bf RecMusic} &\multicolumn{1}{c}{\bf QA} \\

        \midrule
Popular & 0.0917 & 0.2410 & 0.0884 & 0.0821 & 0.1127 & 0.5077 & 0.2886 &  0.0142 & 0.0001 \\ 
AnnCUR & 0.4724 & 0.4280 & 0.2287 & 0.1919 & 0.2455 & 0.6697 & 0.5842 & 0.1478 & 0.5522 \\ 
    \midrule
AnnCUR+KMeans & 0.5083 & 0.4850 & 0.3226 & 0.2517 & 0.3042  & 0.7070 & 0.6184 & 0.1661 &  0.5027 \\ 
RBE+KMeans & 0.5431 & 0.4979 & 0.3399 & 0.2539 & 0.3019 & 0.7137 & 0.6300 & 0.3729 & 0.5505 \\ 
    \midrule
AnnCUR+$l_2$-greedy & 0.5618 & 0.5119 & 0.3677 & 0.2960 & \textbf{0.3357} & 0.7197 & 0.6565 & 0.1478 & 0.5700 \\ 
RBE+$l_2$-greedy & \textbf{0.5849} & \textbf{0.5249} & \textbf{0.3867} & \textbf{0.2992} & 0.3349 & \textbf{0.7234} & \textbf{0.6682} & \textbf{0.3964} &\textbf{0.6022} \\ 
        \toprule
    \end{tabular}
    }
\end{table*}

\subsection{Baselines}\label{sec:anncur}

As our main baseline, we consider the \textbf{AnnCUR} algorithm~\citep{yadav2022efficient} that approximates relevances with the CUR decomposition as discussed in Section~\ref{sec:cur}. What is important for further discussion, a broad comparison of this method with different basic approaches, including various dual encoders, is carried out by~\citet{yadav2022efficient}. In most of our experiments, we rely on these results, comparing only with AnnCUR. However, we explicitly provide the comparison with \textbf{dual encoders} for the new recommendation datasets. In addition to dual encoders, we also compare with the \textbf{AXN} approach proposed by~\citet{yadav2024adacur}. This method iteratively adds candidate items and uses the heavy-ranker scores of the currently chosen candidates to improve the DE predictions for the remaining items. For better interpretability of the results, we also provide metrics for a baseline that always selects the most popular items (not to be confused with Table~\ref{table-key}, where ``popular'' refers to selecting support items).

\subsection{RBE Implementation}\label{sec:expdet}

Following Theorem~\ref{thm:rbe}, we train lightweight neural networks $f_I$ and $f_Q$, which are significantly faster than the heavy ranker $R$, and independently transform the relevance vectors $R(i, S_Q), R(S_I, q)$ into embeddings. The training is performed by the Adam algorithm on sampled batches with a listwise loss function (see Appendix~\ref{sec:implementation-rbe}), similar to the training of various DEs. As follows from Section~\ref{sec:cur}, the CUR decomposition gives a reasonably good approximation of the relevance function. Hence, we split the RBE representation into the CUR representation and the trainable prediction of its error. In the experiments, such decomposition improves the convergence and training stability. Technical implementation details are placed in Appendix~\ref{sec:implementation-rbe}.\footnote{Code and experimental data are available at \href{https://github.com/shevkunov/Relevance-Based-Embeddings-Lightweight-Candidate-Retrieval}{https://github.com/shevkunov/Relevance-Based-Embeddings-Lightweight-Candidate-Retrieval}.}

\section{Experimental Results}

\paragraph{Support items selection}

First, we compare various ways of choosing support elements described in Section~\ref{section-key}. Here, all clustering algorithms are taken from the scikit-learn~\citep{scikit-learn} library, SpectralClusteringNN is a SpectralClustering with ``nearest neighbors'' affinity. The algorithms are used with their default parameters since even this simple setting already allows us to get significant improvements over the random selection. The results are shown in Table~\ref{table-key}, where the best three results for each dataset are highlighted. Clearly, there is a significant superiority of almost any approach based on clustering or diversity over random selection. The theoretically justified $l_2$-greedy algorithm is the clear winner, second and third places are taken by KMeans and AgglomerativeClustering. However, due to the significantly worse quality of AgglomerativeClustering on the Military dataset, KMeans will be used in further experiments. Another observation is that on RecGames, there is a clear superiority of choosing popular items as support items. It is worth mentioning that for this dataset, the elements extracted by popularity are also well diversified by their categories. However, this may not hold for other data (e.g., we do not observe a similar property for RecMusic).

\paragraph{Neural relevance-based embeddings}

Following Section~\ref{sec:expdet}, we also apply trainable relevance mappings $f_I(R(I, S_Q), \theta_I)$, $f_Q(R(S_I, q), \theta_Q)$ to check whether this modification improves prediction quality in practice.  The results are shown in Table~\ref{table-proj}. For better interpretability of the results, the quality of the constant output consisting of the most popular elements (those with the highest average relevance to the training queries) is also reported. In most cases, except for one dataset (Military), trainable relevance mappings improve the quality of the relevance search, and the improvements are obtained for both KMeans and $l_2$-greedy support elements selection. Improvement from the trainable mappings is most noticeable for the QA dataset. Thus, we see that with a simple lightweight transformation, one can get an increased quality on various datasets.
    
\paragraph{Comparison with dual encoders}\label{sec:dec}

In this part, we compare RBE with dual encoders. For the RecGames and RecMusic datasets, we consider the embeddings produced by the dual encoder used in the production of the service (i.e., the one that is empirically shown to perform best in this task). For both DEs, the approximate sizes are provided in Table~\ref{table-de-size}, confirming that we can call RBE a lightweight model.

\begin{table}[h]
\captionof{table}{RBE and DE sizes (trainable parameters).}\label{table-de-size}
\centering
\begin{tabular}{ccc}
\toprule
RBE & DE, RecGames & DE, RecMusic \\
\midrule
$\sim$ 50K & $\sim$ 300M & $\sim$ 700M \\
\bottomrule
\end{tabular}
\end{table}

To fairly compare the performance, we replace the relevance vectors in the RecGames and RecMusic experiments with embeddings obtained by the DE. The only difference between the pipelines for RBE and DE is that since we used $|S_I|=100$ requests to a heavy ranker to obtain an RBE, we have to reduce the budget for RBE heavy-ranker calls when computing the top candidates: in all the comparisons, DE uses $|S_I|$ more heavy-ranker calls than RBE during re-ranking. We also evaluate the performance of AXN which uses the same number of heavy-ranker calls for re-ranking as the dual encoder. We denote this method as $\text{AXN}_{\text{DE}}$ to specify that DE is used as a dual encoder within AXN. Thus, for DE and AXN, we use the metric $\text{HitRate}(X+|S_I|, Y)$ and for RBE~--- $\text{HitRate}(X, Y)$, which gives the former an advantage for small budget $X$.\footnote{In the tables, `HitRate' is shortened to `HR'.}

We conduct two sets of experiments. In the first one, the goal is to retrieve the best $X$ elements (for varying $X$) when the allowed number of heavy-ranker calls changes accordingly and equals $X+100$. In the second set of experiments, the goal is to retrieve the best 100 elements while the number of heavy-ranker calls increases as $X+100$. As can be seen from Table~\ref{table-de}, for RecGames1, RBE is superior to both DE and $\text{AXN}_{\text{DE}}$ baselines starting from $X=200$. As for RecMusic (Table~\ref{table-de-mu}), our algorithm is superior to the baselines starting from $X=100$. Let us note that in the production service, the actual size of the top used to select candidates before ranking is typically large (significantly larger than~100).

\begin{table}[t]
\caption{DE vs RBE on RecGames1.}
\label{table-de}
\centering
\resizebox{\linewidth}{!}{%
\begin{tabular}{ccccccccccc}
\toprule
$X$ & Dual Encoder & $\text{AXN}_{\text{DE}}$ &  RBE+$l_2$-greedy \\
\midrule
 & $\text{HR}(X{+}100,X)$ & $\text{HR}(X{+}100,X)$ & $\text{HR}(X, X)$\\
\midrule
    100 & 0.7048 & \textbf{0.7065 }& 0.6682 \\
    200 & 0.6803 & 0.6769 & \textbf{0.6955} \\
    300 & 0.6739 & 0.6740 & \textbf{0.7221}\\
    400 & 0.6739 & 0.6769 & \textbf{0.7406}\\
    500 & 0.6760 & 0.6820 & \textbf{0.7538}\\
    600 & 0.6792 & 0.6883 & \textbf{0.7639}\\
    700 & 0.6827 & 0.6958 & \textbf{0.7720}\\
    800 & 0.6868 & 0.7054 & \textbf{0.7792}\\
    900 & 0.6904 & 0.7161 & \textbf{0.7853}\\
\midrule
 & $\text{HR}(X{+}100, 100)$ & $\text{HR}(X{+}100, 100)$ & $\text{HR}(X, 100)$\\
\midrule
100 & 0.7048 & \textbf{0.7065} & 0.6682\\
        200 & 0.7977 & 0.7970 & \textbf{0.8359}\\
        300 & 0.8518 & 0.8538 & \textbf{0.9026}\\
        400 & 0.8855 & 0.8902 & \textbf{0.9342}\\
        500 & 0.9086 & 0.9153 & \textbf{0.9522}\\
        600 & 0.9258 & 0.9331 & \textbf{0.9632}\\
        700 & 0.9385 & 0.9465 & \textbf{0.9704}\\
        800 & 0.9484 & 0.9572 & \textbf{0.9760}\\
        900 & 0.9561 & 0.9660 & \textbf{0.9799}\\
\bottomrule
\end{tabular}}
\end{table}
\begin{table}[t]
\captionof{table}{DE vs RBE on RecMusic.}
\label{table-de-mu}
\centering
\resizebox{\linewidth}{!}{%
\begin{tabular}{cccc}
\toprule
$X$ & Dual Encoder & $\text{AXN}_{\text{DE}}$ &  RBE+$l_2$-greedy \\
\midrule
 & $\text{HR}(X{+}100,X)$ & $\text{HR}(X{+}100,X)$ &  $\text{HR}(X,X)$ \\
\midrule
100 & 0.3792 & 0.3843 & \textbf{0.3964} \\
200 & 0.3198 & 0.3333 & \textbf{0.4471} \\
300 & 0.2928 & 0.3015 & \textbf{0.4693} \\
400 & 0.2784 & 0.2879 & \textbf{0.4833} \\
500 & 0.2702 & 0.2839 & \textbf{0.4929} \\
600 & 0.2661 & 0.2835 & \textbf{0.5008} \\
700 & 0.2647 & 0.2836 & \textbf{0.5070} \\
800 & 0.2652 & 0.2855 & \textbf{0.5119} \\ 
900 & 0.2671 & 0.2893 & \textbf{0.5162} \\
\midrule
  & $\text{HR}(X{+}100, 100)$ & $\text{HR}(X{+}100, 100)$ & $\text{HR}(X, 100)$ \\
\midrule
100 & 0.3792 & 0.3843 & \textbf{0.3964} \\
200 & 0.4228 & 0.4296 & \textbf{0.5435} \\
300 & 0.4514 & 0.4557 & \textbf{0.6253} \\
400 & 0.4738 & 0.4834 & \textbf{0.6819} \\
500 & 0.4927 & 0.5020 & \textbf{0.7253} \\
600 & 0.5091 & 0.5198 & \textbf{0.7599} \\
700 & 0.5239 & 0.5351 & \textbf{0.7888} \\
800 & 0.5373 & 0.5485 & \textbf{0.8135} \\ 
900 & 0.5493 & 0.5630 & \textbf{0.8343} \\  
\bottomrule
\end{tabular}}
\end{table}

A similar comparison with the dual encoder on the data from ZESHEL can be found in~\citet{yadav2022efficient}: it is shown that AnnCUR outperforms the dual encoder. Thus, on ZESHEL we compare only with AnnCUR.

\paragraph{Other applications} Although the goal of RBE is to approximate the relevance function, relevance vectors can also be considered good general representations. In Appendix~\ref{sec:catpred}, we conduct an additional experiment on predicting the categories of items in the RecGames1 dataset and obtain that the relevance vectors are informative vector representations for this task. In particular, we show that categories are predicted better by vectors derived from relevance than by DE vectors and that selecting better support items for the relevance retrieval task also improves the quality of category prediction. This experiment demonstrates that the idea of representing elements using relevance vectors provided by a complex model has potential beyond relevance retrieval. See Appendix~\ref{sec:catpred} for the details.

\section{Conclusion \& Future Research}

In this paper, we present the concept of relevance-based embeddings. We justify our approach theoretically and show its practical effectiveness on various academic and production datasets. We demonstrate that RBE allows for obtaining better quality compared with existing approaches while having a significantly smaller model size and lower training cost. Our theoretical and empirical analysis shows that the idea of using relevance embeddings is useful for various machine learning tasks.

\paragraph{Limitations and directions for future research}

The proposed approach relies on the availability of the strong final ranking model, which is a key ingredient for any recommendation service. If the heavy ranker is not sufficiently reliable, then the candidates provided by our approach, while being better according to the relevance model, may lead to worse user experience.

Promising directions for future research include a theoretical analysis of the convergence rate, a deeper investigation of support element selection strategies as well as applying the proposed RBE to other algorithms, e.g., based on using a heavy ranker during the nearest neighbor search~\citep{morozov2019relevance} or adaptive nearest neighbor search~\citep{yadav2024adacur}. Regarding the latter approach, we note that AXN can be naturally combined with any query-item representations, e.g., RBE. The obtained $\text{AXN}_{\text{RBE}}$ model should, in principle, perform at least as well as RBE due to the presence of the regularization that balances between the query-item representation and the proposed modification.

\section*{Acknowledgements}

We thank Daniil Burlakov for providing data for the Yandex Music dataset.

\section*{Impact Statement}

This paper presents work whose goal is to advance the field of Machine Learning. There are many potential societal consequences of our work, none of which we feel must be specifically highlighted here.

\bibliography{references}
\bibliographystyle{icml2026}

\newpage
\appendix
\onecolumn

\section{Theoretical Analysis}

\subsection{Proof of Theorem~\ref{thm:rbe}}\label{app:thm-proof}

\paragraph{Queries as functions on items and vice versa}

Each query $q$ defines a function $r_q$ on items: $r_q(i) = R(i, q)$. Let us call two queries $q$ and $q'$ {\it $R$-equivalent} if $r_q = r_{q'}$ and write $q \sim_{\scriptscriptstyle R} q'$ to denote this relation. $R$-equivalent queries are interchangeable when it comes to measuring their relevance to any item. Let $Q_R$ be the set of $R$-equivalence classes. $Q_R$ may be considered as an image of $Q$ in $C(I)$ under the mapping $R_Q$ which maps query $q$ to $r_q$. This point of view suggests a natural metric $d_{Q_R}$ on $Q_R$ induced by the uniform norm on $C(I)$: $d_{Q_R}(q, q') = \| r_q - r_{q'} \| = \sup_{i \in I} |R(i, q) - R(i, q')|$.

Now let us note that the map $R_Q: Q \to C(I)$ is continuous since $R$ is continuous and $I$ is compact. Therefore, $Q_R$ is compact as an image of a compact space $Q$ under a continuous mapping. And there is an (injective) embedding of $C(Q_R)$ to $C(Q)$ under which the function $f \in C(Q_R)$ goes to $f \circ R_Q \in C(Q)$. Simply speaking, 
a continuous function on the equivalence classes of queries is also a continuous function on the queries themselves.

Analogously, we define:
\begin{align*}
R_I: I \to C(Q),\quad R_I(i) = r_i&, \quad r_i(q) = R(i, q), \quad R_I(I) = I_R, \\
d_{I_R}(i, i') = \| r_i - r_{i'} \| &= \sup_{q \in Q} |R(i, q) - R(i', q)|.
\end{align*}
For convenience, we will identify functions in $C(I_R)$ and $C(Q_R)$ with functions in $C(I_R \times Q_R)$ which are independent of one of their arguments. The relationships mentioned above and similar ones are shown in the diagram below (hooked arrows represent injective mappings, arrows with two heads stand for surjective ones):

\begin{center}
\begin{tikzcd}
& & C(I \times Q) & & \\
& C(I) \ar[ur, hook] & C(I_R \times Q_R) \ar[u, hook] & C(Q) \ar[ul, hook] & \\
& C(I_R) \ar[ur, hook] \ar[u, hook] & & C(Q_R) \ar[ul, hook] \ar[u, hook] & \\
Q \ar[r, two heads] & Q_R \ar[u, hook] & & I_R \ar[u, hook] & I \ar[l, two heads]\\
\end{tikzcd}
\end{center}

Now, let us make several observations.

\vspace{3pt}

\begin{claim}
$r_i \in C(Q_R)$ and, similarly, $r_q \in C(I_R)$.
\end{claim}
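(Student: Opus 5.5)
We first check that $r_i$ is well defined as a function on $Q_R$, and then that it is continuous with respect to the metric $d_{Q_R}$. The statement for $r_q$ on $I_R$ follows by the symmetric argument with the roles of items and queries exchanged.

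\textbf{Well-definedness.} Take an equivalence class $[q]\in Q_R$ and set $r_i([q]) := R(i,q)$. This does not depend on the representative. Indeed, if $q\sim_{\scriptscriptstyle R} q'$, then $r_q = r_{q'}$ as functions on $I$. Evaluating both at $i$ gives $R(i,q)=R(i,q')$. Equivalently, $r_i$ factors through the surjection $R_Q\colon Q\to Q_R$: it is the evaluation functional $\mathrm{ev}_i\colon C(I)\to\R$, $\phi\mapsto\phi(i)$, restricted to $Q_R\subset C(I)$.

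\textbf{Continuity.} For any $[q],[q']\in Q_R$ we have
\[
|r_i([q]) - r_i([q'])| = |R(i,q)-R(i,q')| \le \sup_{j\in I}|R(j,q)-R(j,q')| = d_{Q_R}(q,q').
\]
So $r_i$ is $1$-Lipschitz on the metric space $(Q_R, d_{Q_R})$ and in particular continuous, i.e.\ $r_i\in C(Q_R)$. This is exactly the statement that the evaluation functional on $C(I)$ has norm at most $1$ in the uniform norm.

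The only point needing care is that the topology on $Q_R$ is the one induced by $d_{Q_R}$, i.e.\ the subspace topology from $C(I)$ with the uniform norm. This is the topology used in the preceding paragraph to deduce compactness of $Q_R$ from the continuity of $R_Q$. With this convention the Lipschitz bound suffices and there is no real obstacle. By symmetry, $|r_q([i])-r_q([i'])|\le d_{I_R}(i,i')$, which gives $r_q\in C(I_R)$.
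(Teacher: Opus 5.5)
Your proposal is correct and follows essentially the same route as the paper: the paper's proof uses the single inequality $|r_i(q)-r_i(q')| \le \|r_q - r_{q'}\| = d_{Q_R}(q,q')$ to get both well-definedness on $Q_R$ and the $1$-Lipschitz bound, and then appeals to symmetry for $r_q$. Your version makes the same argument, only checking well-definedness directly from $r_q = r_{q'}$ and adding the evaluation-functional interpretation.
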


\begin{proof}
We note that $\| r_i(q) - r_i(q') \| = \| r_q(i) - r_{q'}(i) \| \le \| r_q - r_{q'} \| = d_{Q_R}(q, q')$. So, $r_i(q) - r_i(q') = 0$ if $r_q = r_{q'}$ and the value $r_i(q)$ does not change if a query is replaced with an equivalent one. It means that $r_i$ is a correctly defined function on the classes of equivalent queries, i.e., on $Q_R$. And finally the same inequality $\| r_i(q) - r_i(q') \| \le d_{Q_R}(q, q')$ implies that the function $r_i$ is 1-Lipschitz with respect to the metric $d_{Q_R}$.
\end{proof}

\begin{claim}
$R \in C(I_R, Q_R)$.
\end{claim}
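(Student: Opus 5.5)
The plan is to interpret the claim as saying that $R$ descends to a well-defined function on $I_R \times Q_R$, and that this function is continuous for the product of the metrics $d_{I_R}$ and $d_{Q_R}$. Concretely, set $\bar R([i],[q]) := R(i,q)$, where $[i]$ and $[q]$ denote the equivalence classes, i.e.\ the points $r_i \in I_R$ and $r_q \in Q_R$.

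First, well-definedness. Suppose $i \sim_{\scriptscriptstyle R} i'$ and $q \sim_{\scriptscriptstyle R} q'$. Then
\[
R(i,q) = r_i(q) = r_{i'}(q) = R(i',q) = r_q(i') = r_{q'}(i') = R(i',q').
\]
The first replacement uses $r_i = r_{i'}$ and the second uses $r_q = r_{q'}$. Equivalently, we can invoke the previous Claim: $r_i$ is a well-defined function on $Q_R$, and symmetrically $r_q$ is a well-defined function on $I_R$.

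Second, continuity, which I would prove as a Lipschitz estimate via the triangle inequality. For classes $([i],[q])$ and $([i'],[q'])$,
\[
|R(i,q) - R(i',q')| \le |R(i,q) - R(i',q)| + |R(i',q) - R(i',q')| .
\]
The first term is at most $\sup_{q''\in Q}|R(i,q'')-R(i',q'')| = d_{I_R}(i,i')$. The second term is at most $\sup_{i''\in I}|R(i'',q)-R(i'',q')| = d_{Q_R}(q,q')$. Hence $\bar R$ is $1$-Lipschitz with respect to the sum metric $d_{I_R}+d_{Q_R}$. This metric induces the product topology on $I_R \times Q_R$, so $\bar R \in C(I_R \times Q_R)$.

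Finally, I would note the compatibility with the diagram: composing $\bar R$ with the surjection $I \times Q \to I_R \times Q_R$ gives back the original $R$. So $R$ is the image of $\bar R$ under the hooked arrow $C(I_R\times Q_R)\hookrightarrow C(I\times Q)$.

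There is no real obstacle here. The only point needing care is that the topology on $I_R$ and $Q_R$ is the uniform-norm metric topology inherited from $C(Q)$ and $C(I)$, not a quotient topology. The estimate above is stated directly in these metrics, so this is handled. The earlier observation that $R_Q$ and $R_I$ are continuous, which relies on compactness, is used only to know that $I_R$ and $Q_R$ are compact; that fact is needed later (e.g.\ for Stone--Weierstrass), not for this claim itself.
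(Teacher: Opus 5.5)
Your proposal is correct and follows essentially the same route as the paper: you show $R$ is well defined on equivalence classes and then 1-Lipschitz with respect to the sum metric $d_{I_R}+d_{Q_R}$. The only cosmetic difference is that the paper reads off well-definedness from the two one-variable Lipschitz inequalities, whereas you check it by a direct chain of equalities and write out the triangle-inequality step that the paper leaves implicit.
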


\begin{proof}
We have $|R(i, q) - R(i, q')| = | r_q(i) - r_{q'}(i) | \le \| r_q - r_{q'} \| = d_{Q_R}(q, q')$ and $|R(i, q) - R(i', q)| \le d_{I_R}(i, i')$. It follows that the value $R(i, q)$ does not change after replacement of a query-item pair $(i, q)$ with some equivalent pair $(i', q')$. So, $R$ can be considered as a function on $I_R \times Q_R$. And by the same inequality $R$ is 1-Lipschitz with respect to the metric $d_R((i, q), (i', q')) = d_{I_R}(i, i') + d_{Q_R}(q, q')$ and hence is continuous.
\end{proof}

\paragraph{Stone-Weierstrass theorem} Let us call all functions $r_q$ and $r_i$ \emph{elementary}. Consider the family of all elementary functions $\mathcal{F} = \{r_q|q \in Q\} \cup \{r_i|i \in I\} \subset C(I_R \times Q_R)$.

\begin{claim}
The family $\mathcal{F}$ separates points in $I_R \times Q_R$, i.e., for each two different points $x, y \in I_R \times Q_R$, there is a function $f \in \mathcal{F}$ such that $f(x) \ne f(y)$.
\end{claim}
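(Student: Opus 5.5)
We argue by cases on which coordinate distinguishes the two points. Let $x = ([i], [q])$ and $y = ([i'], [q'])$ be distinct points of $I_R \times Q_R$, where $[\cdot]$ denotes the $R$-equivalence class. Throughout we use the identification fixed above: $r_q \in C(I_R)$ is viewed as a function on $I_R \times Q_R$ that ignores the second argument, and $r_i \in C(Q_R)$ as one that ignores the first. These functions are well defined on equivalence classes by the first claim. Since $x \ne y$, either $[i] \ne [i']$ or $[q] \ne [q']$.

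\emph{Case 1: $[i] \ne [i']$.} By definition of $R$-equivalence for items, $r_i \ne r_{i'}$ as functions on $Q$. Hence there is a query $q_0 \in Q$ with
\[
R(i, q_0) \ne R(i', q_0).
\]
Take $f = r_{q_0} \in \mathcal{F}$. Since it depends only on the item argument,
\[
f(x) = r_{q_0}(i) = R(i, q_0) \ne R(i', q_0) = r_{q_0}(i') = f(y).
\]

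\emph{Case 2: $[q] \ne [q']$.} Symmetrically, $r_q \ne r_{q'}$ as functions on $I$. Hence there is an item $i_0$ with $R(i_0, q) \ne R(i_0, q')$. Take $f = r_{i_0}$, which depends only on the query argument. Then
\[
f(x) = R(i_0, q) \ne R(i_0, q') = f(y).
\]

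The argument is essentially definitional, so there is no real obstacle. The only point needing care is the bookkeeping of the identification: an item witness $i_0$ yields the elementary function $r_{i_0}$, which lives on the query factor, and vice versa. Well-definedness on classes comes from the first claim, so evaluating at representatives is legitimate.
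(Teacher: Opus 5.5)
Your proof is correct and follows essentially the same approach as the paper. The paper likewise assumes the item classes differ, finds a query $q$ with $R(i_1,q)\ne R(i_2,q)$, and uses $r_q$ as the separating function. It only invokes ``without loss of generality'' for the query case, where you write out the symmetric argument with $r_{i_0}$ explicitly.
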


\begin{proof}
Indeed, let $(i_1, q_1)$ and $(i_2, q_2)$ be any two different points in $I_R \times Q_R$. Then $i_1 \ne i_2$ or $q_1 \ne q_2$. Without loss of generality, we can assume that $i_1 \ne i_2$ (they are unequal as points in $I_R$). So, $r_{i_1}$ and $r_{i_2}$ are different functions on $Q_R$ and there exists $q \in Q_R$ such that $r_{i_1}(q) \ne r_{i_2}(q) \Leftrightarrow R(i_1, q) \ne R(i_2, q) \Leftrightarrow r_q(i_1) \ne r_q(i_2) \Leftrightarrow r_q((i_1, q_1)) \ne r_q((i_2, q_2))$. Thus, we found a function ($r_q$) from our family that separates the two points.
\end{proof}

Next, consider the algebra of functions $\mathbb{R}[\mathcal{F}]$ generated by the family $\mathcal{F}$. This algebra consists of all polynomial combinations of functions in $\mathcal{F}$. More formally, each element of $\mathbb{R}[\mathcal{F}]$ has a representation of the form:
\[
\mathbb{R}[\mathcal{F}] \ni f = \sum_{k=1}^d c_k \cdot r_{i_{k, 1}} \cdot \ldots \cdot r_{i_{k, a_k}} \cdot r_{q_{k, 1}} \cdot \ldots \cdot r_{q_{k, b_k}}.
\]
In other words, there are $d$ sets $S^1,\ldots, S^d$ of queries and items such that:
\[
S^k = S^k_I \cup S^k_Q, \quad
S^k_I = \{i_{k, 1}, \ldots, i_{k, a_k}\} \subset I, \quad
S^k_Q = \{q_{k, 1}, \ldots, q_{k, b_k}\} \subset Q.
\]

\begin{equation}
\label{eq:poly}
f = \sum_{k=1}^d c_k \cdot \left( \prod_{i \in S^k_I} r_i \right) \cdot \left( \prod_{q \in S^k_Q} r_q \right).
\end{equation}
Products of the form $\prod_{i \in S^k_I} r_i$ may be empty and in this case the product equals 1. So, $\R[\mathcal{F}]$ contains constant functions and separates points of $I_R \times Q_R$ (because it contains $\mathcal{F}$). Hence, by the Stone-Weierstrass theorem, the algebra $\mathbb{R}[\mathcal{F}]$ is dense in $C(I_R \times Q_R)$. In particular, the function $R$ can be approximated by an element of $\mathbb{R}[\mathcal{F}]$ up to an arbitrarily small absolute error.

\paragraph{Represent polynomials in $\mathbb{R}[\mathcal{F}]$ as products of query and item embeddings}

Consider an arbitrary function $f \in \R[\mathcal{F}]$ and its representation of the form (\ref{eq:poly}). Denote the products $\prod_{i \in S^k_I} r_i(q)$ and $\prod_{q \in S^k_Q} r_q(i)$ by $\pi_{S^k_I}(q)$ and $\pi_{S^k_Q}(i)$ respectively. Consider two $d$-dimensional vectors:
\begin{align*}
e(q) &= \left(c_1 \cdot \pi_{S^1_I}(q), \ldots , c_d \cdot \pi_{S^d_I}(q) \right),\\
e(i) &= \left(\pi_{S^1_Q}(i), \ldots , \pi_{S^d_Q}(i) \right).
\end{align*}
Then, $f(i, q) = \langle e(i), e(q) \rangle$. Let $S_I = \cup_{k = 1}^d S^k_I$ and $S_Q = \cup_{k = 1}^d S^k_Q$. Then $e(q)$ is a continuous (more specifically, polynomial) function of the vector $R(S_I, q)$ and $e(i)$ is a continuous function of the vector $R(i, S_Q)$. So, by the universality theorem for MLPs \citep{cybenko1989approximation, leshno1993multilayer}, the vector $e(i)$ can be approximated up to arbitrarily small absolute error in the form $f_I(R(i, S_Q), \theta_I)$ where $f_I(\cdot, \theta_I)$ --- a rich enough MLP architecture. Similarly, $e(q)$ can be approximated by $f_Q(R(S_I, q), \theta_Q)$. Hence, $\langle f_I(R(i, S_Q), \theta_I), f_Q(R(S_I, q), \theta_Q) \rangle$ approximates $f(i, q)$. Finally, we can consider $f \in \R[\mathcal{F}]$ such that $\|f - R\| < \frac{\varepsilon}2$ and then find such $\theta_I$ and $\theta_Q$ that $\|f - \langle f_I(R(i, S_Q), \theta_I), f_Q(R(S_I, q), \theta_Q) \rangle \| < \frac{\varepsilon}2$. These parameters will give us a desired $\varepsilon$-approximation of $R$ in a form of the product of relevance-based embeddings.

\subsection{RBE on a Sphere}\label{app:col-proof}

The corollary below shows that the retrieval of the most $R$-relevant items with tolerance to $\varepsilon$-sized relevance loss can be reduced to the standard nearest neighbor search \emph{on a sphere}.

\begin{corollary}\label{col:sphere}
For each $\varepsilon > 0$, there is a multiplier $a \in \R$ such that $a \tilde R$ is an $\varepsilon$-approximation of $R$ and $\tilde R$ uses embeddings scaled to the unit sphere.
\end{corollary}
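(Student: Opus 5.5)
The plan is to start from Theorem~\ref{thm:rbe} and then fix the norms of the embeddings by adding one extra coordinate. Fix $\varepsilon>0$. By Theorem~\ref{thm:rbe} there are embeddings $e_I(i)=f_I(R(i,S_Q),\theta_I)$ and $e_Q(q)=f_Q(R(S_I,q),\theta_Q)$ in $\R^d$ with $\sup_{i,q}|R(i,q)-\langle e_I(i),e_Q(q)\rangle|<\varepsilon/2$. The two embeddings do not have constant norm, and the dot product is not scale invariant, so normalizing them directly would distort $\tilde R$. The remedy is to pad each vector with one coordinate that makes its norm constant, and then rescale. This padding should leave the inner product unchanged.

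First I bound the norms. $I$ and $Q$ are compact, and $e_I,e_Q$ are continuous: they are MLPs composed with the continuous maps $i\mapsto R(i,S_Q)$ and $q\mapsto R(S_I,q)$. Hence there is a constant $C$ with $\|e_I(i)\|\le C$ and $\|e_Q(q)\|\le C$ for all $i,q$.

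Next I define padded embeddings in $\R^{d+2}$ with disjoint padding coordinates:
\[
\hat e_I(i)=\bigl(e_I(i)/C,\ \sqrt{1-\|e_I(i)\|^2/C^2},\ 0\bigr),\qquad \hat e_Q(q)=\bigl(e_Q(q)/C,\ 0,\ \sqrt{1-\|e_Q(q)\|^2/C^2}\bigr).
\]
Both have unit norm, and $\langle \hat e_I(i),\hat e_Q(q)\rangle=\langle e_I(i),e_Q(q)\rangle/C^2$. Setting $\tilde R=\langle\hat e_I,\hat e_Q\rangle$ and $a=C^2$, we get $a\tilde R$ within $\varepsilon/2<\varepsilon$ of $R$, with both embeddings on the unit sphere. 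The maps $\hat e_I,\hat e_Q$ are continuous functions of the relevance vectors. If one insists that they be produced by the architectures $f_I,f_Q$ themselves, I apply the universal approximation theorem again to approximate them within $\delta$ and renormalize.

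The main obstacle is this last step, because the renormalization must not destroy the bound. The approximations stay at distance at least $1-\delta$ from the origin, so normalization is $O(\delta)$-Lipschitz there and moves each vector by $O(\delta)$. The inner product then changes by $O(\delta)$. Multiplying by $a=C^2$ gives an error of $O(C^2\delta)$, which I make smaller than $\varepsilon/2$ by choosing $\delta$ small, with $C$ fixed beforehand.
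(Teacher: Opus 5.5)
Your proposal is correct and follows essentially the same route as the paper: the same $(d+2)$-dimensional padding with two disjoint extra coordinates, the same bound $C$ and multiplier $a=C^2$, and a final re-approximation by architectures whose outputs are normalized to the unit sphere. You are more explicit than the paper in justifying $C$ via compactness and in controlling the renormalization step, though ``$O(\delta)$-Lipschitz'' should read ``$O(1)$-Lipschitz''---what you actually use, and what holds, is that a vector within $\delta$ of a unit vector moves by at most $2\delta$ under normalization, so the inner product changes by at most $4\delta$.
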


\begin{proof}

Let us take some $\frac{\varepsilon}2$-approximation of $R$ of the form
\begin{equation*}
R(i, q) \approx \langle e_Q(q) , e_I(i) \rangle
 = \langle f_Q(R(S_I, q), \theta_Q) , f_I(R(i, S_Q), \theta_I) \rangle
\end{equation*}
via the relevance-based embeddings $e_Q(q)$ and $e_I(i)$ of dimension $d$. Take a constant $C$ such that $\| e_I(i) \| < C$ and $\| e_Q(q) \| < C$ for all $q \in Q$ and $i \in I$ and consider the following vector embeddings of dimension $d + 2$:
\begin{align*}
\tilde e_I(i) &= \left(\frac1{C}e_I(i), \sqrt{1 - \|\frac1{C}e_I(i)\|^2}, 0 \right), \\
\tilde e_Q(q) &= \left(\frac1{C}e_q(q), 0, \sqrt{1 - \|\frac1{C}e_q(q)\|^2} \right).
\end{align*}
Note that $\langle e_I(i), e_Q(q)\rangle = C^2 \cdot \langle \tilde e_I(i), \tilde e_q(q) \rangle$ and $\tilde e_I(i)$ and $\tilde e_q(q)$ lie on the $(d+1)$-dimensional unit sphere $S^{d + 1} \subset \R^{d + 2}$.
We can take new powerful enough architectures $\tilde f_I$ and $\tilde f_Q$ with outputs normalized to unit sphere in $\R^{d + 2}$ and fit for them parameters $\tilde \theta_I$ and $\tilde\theta_Q$ such that 
$\tilde f_I(R(i, S_Q), \tilde\theta_I) \approx \tilde e_I(i)$ and $\tilde f_Q(R(S_I, q), \tilde\theta_Q) \approx \tilde e_Q(q)$ and
$\tilde R(i, q) = \langle \tilde f_I(R(i, S_Q), \tilde\theta_I), \tilde f_Q(R(S_I, q), \tilde\theta_Q) \rangle \approx \langle \tilde e_I(i), \tilde e_Q(q) \rangle$. More specifically, take $\tilde \theta_I$ and $\tilde \theta_Q$ such that:
\begin{equation}
|\langle \tilde e_I(i), \tilde e_Q(q) \rangle - \tilde R(i, q)| < \frac{\varepsilon}{2 C^2}
\Rightarrow |\langle e_I(i), e_Q(q) \rangle - C^2 \tilde R(i, q)| < \frac{\varepsilon}{2}.
\end{equation}
Given that $|R(i, q) - \langle e_I(i), e_Q(q) \rangle| < \frac{\varepsilon}{2}$, it yields $|R - a \tilde R| < \varepsilon$. Which means that the statement of the corollary is satisfied with $a = C^2$.

\end{proof}

\subsection{Proof of Theorem~\ref{thm:cur}}\label{app:CUR}

\paragraph{Preliminaries}

Let us start with some notation. We assume that the spaces of items and queries $I$ and $Q$ are equipped with the structure of a measure space and probabilistic measures $\mu_I$ and $\mu_Q$, respectively. For any item $i$, by $r_i$ we denote a function on $Q$ such that $r_i(q) = R(i, q)$. By $L_2(I) = L_2(I, \mu_I)$ we denote a Hilbert space of measurable functions whose square is integrable. We also denote $\E_i f(i) = \int_I f d\mu_I$ for $f \in L_2(I)$.

Recall that with the CUR decomposition, the relevance $R(i, q)$ is approximated as:
\begin{equation}\label{eq:CUR_app}
\tilde{R}(i, q) = \langle\, R(i, S_Q) \times \mathrm{pinv}(R(S_I, S_Q)), \quad R(S_I, q)\, \rangle
= \sum_{t = 1}^{|S_I|} c_t \cdot r_{i_t}(q)
\end{equation}
with some coefficients $c_t$ that are defined as:
\begin{equation}
\mathbf{c} = (c_1, \ldots, c_{|S_I|}) = \left[R(i, S_Q) \times \mathrm{pinv}(R(S_I, S_Q))\right]^T
= \mathrm{pinv}(R(S_I, S_Q)^T) \times R(i, S_Q)^T.
\end{equation}
The last equality holds because $\mathrm{pinv}(A)^T = \mathrm{pinv}(A^T)$.

For any vector $v \in \R^{|S_Q|}$:
$$
\mathrm{pinv}(R(S_I, S_Q)^T) v \in \argmin_{x \in \R^{|S_I|}} \|R(S_I, S_Q)^T x - v \|^2_2\,.
$$
Thus, the coefficients $\mathbf{c}$  minimize the MSE between $R(i, S_Q)^T$ and $R(S_I, S_Q)^T \mathbf{c}$, i.e., between  the relevances of queries from $S_Q$ to the item $i$ and $\sum_{t = 1}^{|S_I|} c_t \cdot r_{i_t}$. Thus, computing the item embedding $\mathbf{c}$ is merely solving a linear regression.

Then, CUR$_{\lambda}$ is the analog of the CUR approximation that uses $l_2$ regularization with coefficient $\lambda$ while solving these multiple linear regression problems. Formally, recall that we define
\[
\mathrm{pinv}_\lambda(A) = (A^TA + \lambda E)^{-1}A^T
\]
with $E$ being the identity matrix of a proper size. Then, CUR$_{\lambda}$ uses $\mathrm{pinv}_\lambda$ instead of $\mathrm{pinv}$ in~\eqref{eq:CUR_app}.

Now, we are ready to prove the theorem. We do it in the following steps.

\paragraph{Step 1}
We note that the function $R$ can be represented in the form of a no more than countable sum:
$$
R(i, q) = \sum_{k = 0}^K \lambda_k f_k(i) h_k(q),
$$
where $K \in \N \cup \infty$ and $\{f_0, f_1, \ldots\}$, $\{h_0, h_1, \ldots\}$ are orthonormal sets of vectors in $L_2(I)$ and $L_2(Q)$, respectively. Equality holds almost everywhere on $I \times Q$ with respect to the measure $\mu_I\times \mu_Q$. Note that this statement is a generalization of finite dimensional PCA. Without loss of generality, we can assume that $\lambda_k \geqslant \lambda_{k + 1}$. Then, the approximation $R(i, q) \approx \sum_{k=1}^{n} \lambda_k f_k(i) h_k(q)$ is a function on $I\times Q$ of rank $n$ closest to $R$ in $L_2(I \times Q)$ (like in ordinary PCA).

To prove this, we consider an operator $A: L_2(I) \to L_2(Q)$, $A(f)(q) = E_{i \sim \mu_I} f(i) R(i, q)$. $A$ is a Hilbert-Schmidt integral operator and hence it is compact. So, $A^*A$ is compact self-adjoint positive semi-definite operator in $L_2(I)$. By the spectral theorem for compact operators, $A^*A (v)= \sum_k \lambda_k^2 f_k \langle f_k, v \rangle$, where $\{f_1,\ldots\}$ is at most countable orthonormal set. Taking $h_k = \frac1{\lambda_{k}} A(f_k)$ completes the construction.

\paragraph{Step 2}
Let $\{i_1, i_2, \ldots \}$ be an infinite sequence of independently sampled items. Then, with probability one (with respect to sampling of $\{i_1, i_2, \ldots \}$), there is $N$ large enough that:
$$ \mathrm{E}_{i} \left( \min_{i' \in \{i_1, \ldots, i_N\}} \| r_i - r_{i'}\|_2^2 \right) < \varepsilon.$$
In other words, for every item $i$, there is some replacement $r(i) \in \{i_1, \ldots, i_N\}$ such that 
${\E_{i, q} |R(i, q) - R(r(i), q)|^2 < \varepsilon}$. To prove that, we need resolve some technical issues.

\paragraph{Substep 2.1} Let $H$ be a countably dimensional (and hence separable) subspace of $L_2(Q)$ generated by $\{h_1,h_2,\ldots \}$. Then, $r_i \in H$ with probability one (with respect to the measure $\mu_I$).

We have $r_i(q) - \sum_{k = 0}^K \lambda_k f_k(i) h_k(q) = 0$ almost everywhere on $I\times Q$, so for almost every $i$: $r_i(q) = \sum_{k = 0}^K \lambda_k f_k(i) h_k(q)$ for almost every $q$.
Also, $\|r_i\|_2 < \infty$ for almost every $i$ (otherwise $\|R\|_2$ could not be finite). So, for $i$ such that both $\|r_i\|_2 < \infty$ and $r_i(q) = \sum_{k = 0}^K \lambda_k f_k(i) h_k(q)$ holds that $r_i \in H$. Thus, further we can assume that $r_i \in H$ (ignoring the set of ``bad'' items of measure 0).

\paragraph{Substep 2.2} The mapping $e: i \to r_i \in H$ is measurable with respect to the Borel $\sigma$-algebra on $H$.

Let $B_\delta(h)$ be a ball of radius $\delta$ around $h \in H$.
Then, it is sufficient to show that for every $h \in H$ and $\delta > 0$ the set of items $e^{-1}(B_\delta(h))$ is measurable in $I$.
Let $f(i) = \int_q |R(i, q) - h(q)|^2 d\mu_Q$. The function $|R(i, q) - h(q)|^2$ is integrable over $I\times Q$. 
So by the Fubini's theorem, $f$ is integrable over $I$.
In particular, the set $\{i | f(i) < \delta^2 \} = e^{-1}(B_\delta(h))$ is measurable.

\paragraph{Substep 2.3}
$\forall \varepsilon > 0$ for almost every $i \in I$: $P_{i' \sim \mu_I}(\|r_i - r_{i'}\| < \varepsilon) > 0$.

Consider the Borel measure $e_*(\mu_I)$ on $H$ (the image of $\mu_I$ under the mapping $e$: $e_*(\mu_I)(A) = \mu_I(e^{-1}(A))$ or, informally speaking, the distribution of all $r_i$ in $H$). Let $X \subset H$ be a countable dense subset in $H$. Consider the set $X_{\varepsilon / 2} = \{x \in X | e_*(\mu_I)(B_{\varepsilon / 2}(x)) = 0 \}$ and $Y = \cup_{x \in X_{\varepsilon / 2}} B_{\varepsilon / 2}(x)$. $Y$ is the union of a countable set of balls of measure zero, so $e_*(\mu_I)(Y) = 0$. For $h$ such that $e_*(\mu_I)( B_\varepsilon (h)) = 0$ there is $x \in X$ such that $\|h - x\| < \varepsilon / 2$. $B_{\varepsilon / 2}(x) \subset B_\varepsilon(h)$, hence $e_*(\mu_I)(B_{\varepsilon / 2}(x)) = 0$, hence $x \in X_{\varepsilon / 2}$, hence $h \in B_{\varepsilon / 2}(x) \subset Y$. So, $\{h \in H | P_i(\|r_i - h\| < \varepsilon)\} \subset Y$ and the measure of this set is zero which yields the required statement.

\paragraph{Substep 2.4}
Take our infinite sequence of independently sampled items $\{i_1, i_2, \ldots \}$. Consider the sequence of functions:
$$f_n(i, i_1,\ldots, i_n) = \min_{i' \in \{i_1, \ldots, i_n\}} \| r_i - r_{i'}\|_2^2.$$
We know that for every $i$: $P_{i'}(\|r_i - r_{i'}\|_2^2 < \varepsilon^2) > 0$. So with probability one some item $i_k$ will fall into $B_{\varepsilon}(r_i)$ and $\forall n \geqslant k: f_n(i, i_1,\ldots, i_n) < \varepsilon^2$. It follows that $f_n \to 0$ almost everywhere (on some large measure space where all independent variables $i, i_1, i_2,\ldots$ are defined). The sequence $f_n$ is bounded by the integrable function $f_1(i, i_1) = \|r_i - r_{i_1}\|_2^2$.

It follows that for almost every item $i$ and an infinite sequence $\{i_1,\ldots\}$ of items independently sampled from $\mu_I$,
$$\lim_{k \to \infty} \left( \min_{i' \in \{0\} \cup \{i_1, \ldots, i_k\}} \| r_i - r_{i'}\|_2^2 \right) = 0.$$
The expression inside the $\lim$ is bounded by $\|r_i\|^2_2$. Hence, by the Lebesgue's dominated convergence theorem, its mean over $i \in I$ tends to zero.

\paragraph{Step 3} Take $N$ large enough that 
$$ \mathrm{E}_{i} \left( \min_{i' \in \{0\} \cup \{i_1, \ldots, i_N\}} \| r_i - r_{i'}\|_2^2 \right) < \varepsilon .$$
Take $\{i_1, \ldots, i_N\}$ as our support items.
Let $I_N = \mathrm{span}(r_{i_1}, \ldots, r_{i_N})$ be a linear span of the first $N$ random items in $L_2(Q)$. Then, for each choice of the regularization coefficient $\lambda$:
$$\mathrm{E}_i \min_{c_1,\ldots, c_N} \left(\lambda \sum_{k = 1}^N c_k^2 + \|r_i - \sum_{k=1}^N c_k r_{i_k} \|^2_2 \right) < \lambda + \varepsilon.$$
For $v \in L_2(Q)$, let $proj_\lambda(v, I_N)$ be a linear combination $c_1 r_{i_1} + \ldots + c_N r_{i_N}$ that minimizes $\lambda \sum_{k = 1}^N c_k^2 + \|v - \sum_{k=1}^N c_k r_{i_k} \|^2_2$. So, $\mathrm{E}_i \|r_i - proj_\lambda(r_i, I_N)\|_2^2 < \lambda + \varepsilon$. Take $\lambda = \varepsilon$.

\paragraph{Step 4} We know that $proj_\varepsilon (r_i, I_N)$ is close to $r_i$ (the average $l_2$ distance is at most $\sqrt{2 \varepsilon}$). Next, we will prove that the linear combination of the support items by which the regularized CUR approximates $r_i$ is close to $proj_\varepsilon (r_i, I_N)$ (on average).

Let $\{q_1, q_2, \ldots \}$ be a sequence of random queries independently sampled from $\mu_Q$. Let $Q_m$ be a set of the first $m$ queries. Then, for any pair of items $i, i' \in I$ let $\langle i, i' \rangle_m = \frac1m \sum_{k = 1}^m R(i, q_k) \cdot R(i', q_k)$. In other words, $\langle i, i' \rangle_m$ is the Monte-Carlo estimate of the product of $r_i$ and $r_{i'}$ in $L_2(Q)$. Consider $N \times m$ matrix $\hat R$ such that $\hat R_{a, b} = R(i_a, q_b)$. The matrix $\hat G = \hat R \hat R^T$ consists of pairwise estimates of products of support items $\hat G_{kl} = \langle i_k, i_l \rangle_m$. As $m$ tends to infinity, the matrix $\hat G$ converges to the matrix $G$ of the exact scalar products of support items in the space $L_2(Q)$: $G_{kl} = \mathrm{E}_{q} \left( r_{i_k}(q) r_{i_l}(q)\right)$.

For each item $i$, let $\langle i, I_N\rangle_m$ be a vector $(\langle i, i_1\rangle_m, \ldots, \langle i, i_N\rangle_m)$
and $\langle i, I_N\rangle = (\langle i, i_1\rangle, \ldots, \langle i, i_N\rangle) = \mathrm{E_q} \langle i, I_N\rangle_m$.
Each component of $\langle i, I_N\rangle_m$ is the average of $m$ random variables of the form $r_i(q) r_{i_k}(q)$ which are conditionally mutually independent given $i$ and $I_N$. Let us estimate the mean squared deviation of $\langle i, I_N\rangle_m$ from $\langle i, I_N\rangle$ (with the fixed set of support items~$I_N$):
\begin{multline*}
\mathrm{E}_{i, q_1,\ldots,q_m} \|\langle i, I_N\rangle_m - \langle i, I_N\rangle\|^2_2 =
\frac1m \sum_{k=1}^N \mathrm{E}_{i, q}(\langle r_i, r_{i_k} \rangle - r_i(q)r_{i_k}(q))^2  \\
\leqslant
\frac1m \sum_{k=1}^N \mathrm{E}_{i, q} r_i^2(q)r_{i_k}^2(q)) 
\leqslant
\frac1m \sum_{k=1}^N \mathrm{E}_{i} \|r_i^2\| \|r_{i_k}^2\| \\
= \frac1m \mathrm{E}_{i} \|r_i^2\| \sum_{k=1}^N \|r_{i_k}^2\|
\leqslant
\frac1m \|R^2\| \sum_{k=1}^N \|r_{i_k}^2\|.
\end{multline*}
So, the mean squared deviation of $\langle i, I_N\rangle_m$ from the vector $\langle i, I_N\rangle$ tends to zero as $m \to \infty$.

Finally, let $A$ be an operator that maps the coefficients $c_1, \ldots, c_N$ to linear combinations of support items $c_1 r_{i_1} + \ldots + c_N r_{i_N} \in L_2(Q)$. Note that the operator norm $\|A\|$ of A is finite as it is a finite rank operator (so $\|Av\| \leqslant \|v\| \|A\|$).

We have
$$ proj_\varepsilon(r_i, I_N) = A (G + \varepsilon E_N)^{-1} \langle i, I_N\rangle, $$
where $E_N$ is $N \times N$ identity matrix. While the regularized CUR approximation of $r_i$ is:
$$ CUR_\varepsilon(r_i, I_N) = A (\hat G + \varepsilon E_N)^{-1} \langle i, I_N\rangle_m. $$
Let $\delta_m$ be the vector $\langle i, I_N\rangle_m - \langle i, I_N\rangle$ and $\delta G^{-1}_m$ be $(\hat G + \varepsilon E_N)^{-1} - (G + \varepsilon E_N)^{-1}$.
Then, the norm of the difference between $CUR_\varepsilon(r_i, I_N)$ and $proj_\varepsilon(r_i, I_N)$ can be estimated as:
\begin{multline*}
\|CUR_\varepsilon(r_i, I_N) - proj_\varepsilon(r_i, I_N)\| \\ =
\|A ((G + \varepsilon E_N)^{-1} + \delta G^{-1}_m)(\langle i, I_N\rangle + \delta_m) - A (G + \varepsilon E_N)^{-1} \langle i, I_N\rangle \| \\ =
\| A (\delta G^{-1}_m \langle i, I_N\rangle + \delta G^{-1}_m \delta_m + (G + \varepsilon E_N)^{-1} \delta_m) \| \\ \leqslant
\|A\| (\|\delta G^{-1}_m \| \|\langle i, I_N\rangle \| + \|\delta G^{-1}_m\| \|\delta_m\| + \|(G + \varepsilon E_N)^{-1}\| \|\delta_m\|)  \\ \leqslant \|A\| \cdot (\|\delta_m\| \frac1\varepsilon + \|\delta G^{-1}_m\| \|\langle i, I_N\rangle \| + \|\delta_m\| \|\delta G^{-1}_m\|),
\end{multline*}
where we use $\|(G + \varepsilon E_N)^{-1}\| \leqslant \frac1\varepsilon$.

We can take $m$ large enough that the expectation of the square of that difference is arbitrarily small, say less than $\varepsilon$. Then,
\begin{equation}
\mathrm{E}_i \|r_i - CUR_\varepsilon(i)\|^2
\leqslant 2 \mathrm{E}_i \|r_i - proj_\varepsilon(r_i, I_N)\|^2 +  2 \mathrm{E}_i \|proj_\varepsilon(r_i, I_N) - CUR_\varepsilon(i)\|^2 \leqslant 6 \varepsilon.
\end{equation}

\section{Greedy Selection of Support Items}\label{app:greedy}

Let us denote $X := (I, S_Q)$, which is an $M\times n$ matrix ($M$ is the total number of items). Then, our optimization problem can be formulated as follows. 

We are given an $M \times n$ matrix $X$ of real numbers and let $x_i^T$, $i = 1, \ldots, M$, be the rows of $X$. Choose $m$ rows in such a way that the sum of squared distances from each row of the matrix to the space generated by the chosen rows would be minimal. In other words, find a subset of indices $S = \{i_1, \ldots, i_m\} \subset \{1, \ldots, M\}$ which minimizes following expression:
\begin{equation*}
\sum_{i = 1}^M \|x_i - \pi(x_i, \mathrm{span}(x_{i_1},\ldots,x_{i_m}))\|_2^2
= \sum_{i = 1}^M \|x_i - X_S^T\mathrm{pinv}(X_S^T) x_i\|_2^2,        
\end{equation*}
where $\pi(v, V)$ is orthogonal projection of vector $v$ to subspace $V$, $X_S$ is an $m\times n$ matrix consisting of rows with indices from $S$. This problem corresponds to the CUR-decomposition of $X$ with $m$ rows and all $n$ columns.

A straightforward way is to choose items greedily. Suppose we have already chosen items ${i_1, \ldots, i_t}$. Then, we choose an item $i_{t + 1}$ so that 
$$ \sum_{i = 1}^M \|x_i - \pi(x_i, \mathrm{span}(x_{i_1},\ldots,x_{i_{t + 1}}))\|_2^2 $$
is minimal. 

Let us discuss how to choose $x_{i_{t + 1}}$. Let $\Delta^t$ be the $M\times n$ matrix of our current approximation errors: $\Delta^t_i = x_i - \pi(x_i, \mathrm{span}(x_{i_1},\ldots,x_{i_t}))$, $\Delta^0 = X$.
Note that $\mathrm{span}(x_{i_1},\ldots,x_{i_t},x_i) = \mathrm{span}(x_{i_1},\ldots,x_{i_t}, \Delta^t_i / \|\Delta^t_i\|_2)$, so for the purpose of evaluation our objective we can replace $x_i$ with $o_i^t = \Delta^t_i / \|\Delta^t_i\|_2$.
When we add $x_i$ to the support set, the squared error on $x_j$ reduces by $\langle x_j, o_i^t\rangle^2$ and $\Delta_j^t$ becomes $\Delta_j^t - \langle x_j, o_i^t\rangle o_i^t$. It can be seen by considering the orthonormal basis of $\R^m$, the first $t$ elements of which generate $\mathrm{span}(x_{i_1}, \ldots, x_{i_{t}})$ and $(t+1)$-th is $o^t_i$. Adding $o^t_i$ to the support set will set to zero the $(t + 1)$-th coordinate of the vector $x_j^t$ (and $\Delta_j^t$); and in the standard basis this coordinate can be computed as $\langle x_j, o_i^t\rangle$. Thus, we want to maximize over $i$:
\[
\sum_{j=1}^M \langle x_j, o^t_i\rangle^2 = \sum_{j=1}^M o^{tT}_i x_j x_j^T o^t_i
= o^{tT}_i \left(\sum_{j=1}^M x_j x_j^T\right) o^t_i = o^{tT}_i X^TX o^t_i.
\]
The procedure is summarized in Algorithm~\ref{alg:cap}.

\begin{algorithm}
\caption{$l_2$-greedy support items selection}\label{alg:cap}
\begin{algorithmic}
\STATE compute $X^TX$
\STATE compute normalized vectors $o_i^0 = x_i / \|x_i\|_2$
\FOR{$t$ in $[1, \ldots, m]$}
    \STATE choose $i_{t + 1}$ that maximizes $o_i^{tT} X^TX o_i^t$
    \FOR{j in $[1,\ldots, M]$}
        \STATE $o_j^{t+1} \leftarrow o_j^{t} - o_{i_{t + 1}}^t \langle o_{i_{t + 1}}^t, o_j^{t}\rangle$
        \STATE $o_j^{t+1} \leftarrow o_j^{t+1} / \|o_j^{t+1}\|_2$
    \ENDFOR
\ENDFOR
\end{algorithmic}
\end{algorithm}

The choice of the next support item may be trivially implemented with $O(n^2 M)$ complexity.
But it can be optimized: together with $o_j^t$ we can keep the vectors $c_j^t = X^TX o_j^t$ that can be computed once initially in $O(n^2 M)$ and can be updated at each iteration synchronously with $o_j^t$. Updates of $o_j^t$ at each iteration have the form $o_j^{t + 1} = \alpha_j o_j^{t} + \beta_j o_{i_{t + 1}}^t$, so $c_j$ transforms analogously with the same coefficients: $c_j^{t + 1} = \alpha_j c_j^{t} + \beta_j c_{i_{t + 1}}^t$. So we can score all the items in $O(nM)$, computing all the dot products $\langle o_j^t, c_j^t\rangle$, and update the vectors $o_j$ and $c_j$. The total complexity of the algorithm is $O(nM(n + m))$.

\section{Datasets}\label{app:datasets}

\begin{wraptable}[12]{r}{0.5\textwidth}
\vspace{-14pt}
\caption{Datasets sizes.}
\label{table-data}
\vspace{-2pt}
\centering
\small
\begin{tabular}{lcc}
\toprule
\, & items & queries (used) \\ 
\midrule
Yugioh          & 10031  & 3374 \\
P.Wrest.        & 10133  & 1392 \\
StarTrek        & 34430  & 4227 \\
Dr.Who          & 40281  & 4000 \\
Military        & 105K & 2400 \\
RecGames2          & 16514  & 6958 \\
RecGames1        & 16514  & 6958 \\ 
RecMusic         & 8950   & 10179 \\
QA.Small        & 82326  & 9650 \\
QA              & 0.8M   & 9650 \\
\bottomrule
\end{tabular}
\end{wraptable}
\paragraph{ZESHEL}
The Zero-Shot Entity Linking (ZESHEL) dataset was constructed by~\citet{logeswaran2019zero} from Wikia. The task of zero-shot entity linking is to link mentions of objects in the text to an object from the list of entities with related descriptions. The dataset consists of 16 different domains. Each domain contains disjoint sets of entities, and during testing, mentions should be associated with entities solely based on entity descriptions. We run the experiments on five domains from ZESHEL selected by~\citet{yadav2022efficient}. As a heavy ranker $R$, we use the publicly available cross-encoder trained by~\citet{yadav2022efficient}. 

\vspace{-2pt}

\paragraph{Question-Answering}
To additionally cover the question-answering domain, we conducted experiments on the  MS MARCO~\citep{NguyenRSGTMD16} based dataset provided by Hugging Face.\footnote{v1.1 version from \url{https://huggingface.co/datasets/microsoft/ms_marco}.} As a heavy ranker, we use \textit{all-mpnet-base-v2} from the SentenceTransformers library~\citep{reimers-2019-sentence-bert}, which is trained, among other datasets, on the MS MARCO data. We took $\sim$10K test queries and 0.8M passages corresponding to them (QA). For the experiments in Table~\ref{table-key}, we used the smaller (QA.Small) version with 82K passages (only the test passages).

\vspace{-2pt}

\paragraph{RecGames}
We collected a dataset from a production service Yandex Games providing recommendations of games to users. Here, a heavy ranker $R$ is the CatBoost gradient boosting model~\citep{prokhorenkova2018catboost} trained on a wide range of features, including categories and other static attributes of items, social information (age, language, etc.) of users, item and user statistics, real-time statistics of user and item interactions,  features derived from the matrix factorizations and multiple two-tower neural networks that use the features listed above as their features. There are two versions of this dataset. In RecGames1, CatBoost was trained to predict the time that a user spends on a given item after the click (in one session). In RecGames2, CatBoost was trained to predict the item with the longest time spent for some long period of time after the click (including new sessions).

\vspace{-2pt}

\paragraph{RecMusic}

To further increase diversity of the considered domains, we collected a dataset from another production recommendation service~--- Yandex Music. Here, CatBoost is used as a heavy ranker, and a dual encoder is a baseline. Both CatBoost and DE are trained on a large set of external data and features. A smaller subsample of this data was used to train RBE, since we believe that RBE is able to show good quality even on a significantly smaller size of the training data.

Table~\ref{table-data} shows the statistics of all the datasets. Note that DEs were trained on a significantly larger number of queries, the table refers to the data used for AnnCUR/RBE training.

\section{Details on RBE Implementation}\label{sec:implementation-rbe}

In this section, we discuss our implementation of the relevance-based embeddings. The code is available at \href{https://github.com/shevkunov/Relevance-Based-Embeddings-Lightweight-Candidate-Retrieval}{https://github.com/shevkunov/Relevance-Based-Embeddings-Lightweight-Candidate-Retrieval}.

As a trainable mapping $f_Q(R(S_I, q), \theta_Q)$, we use the following variant:  
$$f_Q(R(S_I, q), \theta_Q) := R(S_I, q) \, || \, F_Q^{mlp}(R(S_I, q), \theta_Q) \, || \, (1),$$
where $F_Q^{mlp}$ is a 2-layer perceptron with the ELU activations, $||$ is the vector concatenation, and the last term is needed to represent the item-specific biases as a scalar product. The intuition here is that we split the representation into the prediction of AnnCUR and the trainable prediction of its error. In the experiments, such decomposition improves the convergence and training stability.
        
For the item mapping $f_I(R(i, S_Q), \theta_I)$, we use the following function:
\begin{equation}
\begin{split}
f_I(R(i, S_Q), \theta_I) & := t_I(R(i, S_Q), \theta_I) \, || \, F_I^{mlp}(t_I(R(i, S_Q), \tilde\theta_I)) \, || \, (c_i), \\
t_I(R(i, S_Q), \theta_I) & := R(i, S_Q) \times P, P = \mathrm{pinv}(R(S_I, Q_{train})), \\
\theta_I & := (P, c, \tilde \theta_I),
\end{split}
\end{equation}
where $c$ is a trainable bias vector, $\tilde\theta_I$ --- perceptron trainable parameters. Although, as noted in Section~\ref{sec:rbeprac}, the transformation $f_I$ acts in practice on a finite set $I$ of elements and can be learned as an embedding matrix, the approach described above greatly accelerates the speed and stability of learning.
        
The mappings are trained using the Adam algorithm to optimize the following loss function inside the batches (both items and queries could be sampled):
$$L := - \frac{\sum\limits_{q\in Q_{train}} \text{softmax}(\tilde R(q, I)) \dot (2 \cdot \mathbb{1}_\mathrm{binRelevance}(q) - 1)}{|Q_{train}|},$$
$$\mathrm{binRelevance}(q) := \tilde R(q, I) \geq q_{1 - \frac{K}{|I|}}(R(q, I)),$$
where $K$ is the desired top size and $q_x(v)$ is the $x$-th quantile of the vector $v$. We have experimented with various loss functions, but the one described above leads to consistently good results.

Our implementation of RBE has about 50K trainable parameters. 

\section{Additional Experiments}

\subsection{Dual Encoder Embeddings vs Support Relevances (Extended)}\label{dec-bonus}

Table~\ref{table-de-new-axn} is the extended version of Table~\ref{table-de} from the main text.

\begin{table}
\caption{DE vs RBE on RecGames1 (extended).}
\label{table-de-new-axn}
\vspace{-4pt}
\begin{center}
\small
\begin{tabular}{ccccc}
\toprule
$X$ & Dual Encoder & $\text{AXN}_{\text{DE}}$ & RBE+KMeans & RBE+$l_2$-greedy \\
\midrule
 & $\text{HR}(X{+}100,X)$ & $\text{HR}(X{+}100,X)$ & $\text{HR}(X,X)$ & $\text{HR}(X,X)$\\
\midrule
    100 & 0.7048 & \textbf{0.7065} & 0.6300 & 0.6682 \\
    200 & 0.6803 & 0.6769 & 0.6611 & \textbf{0.6955} \\
    300 & 0.6739 & 0.6740 & 0.6912 & \textbf{0.7221}\\
    400 & 0.6739 & 0.6769 & 0.7109 & \textbf{0.7406}\\
    500 & 0.6760 & 0.6820 & 0.7253 & \textbf{0.7538}\\
    600 & 0.6792 & 0.6883 & 0.7357 & \textbf{0.7639}\\
    700 & 0.6827 & 0.6958 & 0.7448 & \textbf{0.7720}\\
    800 & 0.6868 & 0.7054 & 0.7527 & \textbf{0.7792}\\
    900 & 0.6904 & 0.7161 & 0.7589 & \textbf{0.7853}\\
\midrule
 & $\text{HR}(X{+}100,100)$ & $\text{HR}(X{+}100,100)$ & $\text{HR}(X,100)$ & $\text{HR}(X,100)$\\
\midrule
100 & 0.7048 & \textbf{0.7065} & 0.6300 & 0.6682\\
        200 & 0.7977 & 0.7970 & 0.8090 & \textbf{0.8359}\\
        300 & 0.8518 & 0.8538 & 0.8823 & \textbf{0.9026}\\
        400 & 0.8855 & 0.8902 & 0.9190 & \textbf{0.9342}\\
        500 & 0.9086 & 0.9153 & 0.9402 & \textbf{0.9522}\\
        600 & 0.9258 & 0.9331 & 0.9536 & \textbf{0.9632}\\
        700 & 0.9385 & 0.9465 & 0.9629 & \textbf{0.9704}\\
        800 & 0.9484 & 0.9572 & 0.9691 & \textbf{0.9760} \\
        900 & 0.9561 & 0.9660 & 0.9740 & \textbf{0.9799} \\     
\bottomrule
\end{tabular}
\end{center}
\vskip -0.1in
\end{table}

\newpage

\subsection{Additional Experiments on QA}~\label{app:cece}

\vspace{-10pt}

We additionally conducted an experiment using the ms-marco-MiniLM-L-6-v2 model\footnote{\href{https://huggingface.co/cross-encoder/ms-marco-MiniLM-L6-v2}{https://huggingface.co/cross-encoder/ms-marco-MiniLM-L6-v2}} as a heavy ranker. Since this model is heavier, we used only 10K passages from the QA dataset. We also compare $\text{HitRate}(X{+}100, X)$ for all-mpnet-base-v2 as DE with $\text{HitRate}(X, X)$ for RBE (DE has a larger re-ranking budget due to 100 support items relevance computations of RBE).

We note that this comparison may underestimate RBE, since the RBE embedding dimension is 100, whereas the DE embedding dimension is 768.

\begin{table*}[h]
\caption{QA sample, different heavy ranker.}
\begin{center}
\small
\begin{tabular}{cccccc}
\toprule
$X$ & Dual Encoder & AnnCUR & AnnCUR+KMeans & AnnCUR+$l_2$-greedy & RBE+$l_2$-greedy \\
\midrule
 & $\text{HR}(X{+}100,100)$ & $\text{HR}(X,100)$ & $\text{HR}(X,100)$ & $\text{HR}(X,100)$ & $\text{HR}(X,100)$ \\
\midrule
100 & \textbf{0.5444} & 0.4148 &  0.4511 & 0.4983 & 0.5417 \\
200 & 0.6218 & 0.5571 & 0.5962 & 0.6491 & \textbf{0.6920} \\
300 & 0.6721 & 0.6319 & 0.6627 & 0.7137 & \textbf{0.7556} \\
400 & 0.7079 & 0.6778 & 0.7034 & 0.7517 & \textbf{0.7928} \\
500 & 0.7350 & 0.7096 & 0.7318 & 0.7775 & \textbf{0.8186} \\
600 & 0.7565 & 0.7339 & 0.7532 & 0.7966 & \textbf{0.8379} \\
700 & 0.7741 & 0.7531 & 0.7704 & 0.8120 & \textbf{0.8528} \\  
\bottomrule
\end{tabular}
\end{center}
\end{table*}

\subsection{Category Prediction from Relevance-Based Embeddings}~\label{sec:catpred}

Often in practice, embeddings trained to solve one problem are also applied to other ones. Following this, we conducted an additional experiment on predicting the categories of items in the RecGames1 dataset and obtained that the relevance vectors are informative vector representations for this task. 

Namely, we trained a simple MLP to predict the category from either the CUR-transformed relevance vectors $R(I, S_{Q}) \times \mathrm{pinv}(R(S_I, S_Q))$ obtained with different support items selection strategies or DE embeddings (sizes are the same). The categories of the elements are marked by the authors of the content among 30 available options, multiple categories are allowed. Intersection over Union (IoU) is used as a metric:
$$\text{IoU}(C_{pred}, C_{true}) = \frac{C_{pred} \cap C_{true}}{C_{pred} \cup C_{true}}.$$


\begin{wraptable}[8]{r}{0.35\textwidth}
\vspace{-10pt}
\caption{Category prediction on RecGames1.}
\label{table-catpred}
\vspace{-5pt}
\begin{center}
\begin{small}
\begin{tabular}{ccc}
\toprule
Support selection &  IoU  \\

\midrule
(Dual Encoder) & 0.6796 \\
\midrule
Random & 0.7331 \\
KMeans & 0.7419 \\
$l_2$-greedy & 0.7516 \\
\bottomrule
\end{tabular}
\end{small}
\end{center}
\vskip -0.1in
\end{wraptable}
The results are presented in Table~\ref{table-catpred}. Two conclusions can be made: first, in this setup, categories are predicted better by vectors derived from relevance than by DE vectors; and second, selecting better support items for the relevance retrieval task also improves the quality of category prediction. 
This experiment demonstrates that the idea of representing elements using relevance vectors provided by a complex model has potential beyond relevance retrieval.
Although these results are promising, we note that the possibility of such a transfer in other tasks and data requires a more detailed study.

\subsection{Standard Deviation for AnnCUR (with Random Selection)}\label{sec:annrnd}

Since our baseline includes a random selection of support elements, the results can be noisy. In Table~\ref{table-stats}, we provide the average and standard deviation of $\text{HitRate}(100)$ value (similar to Table~\ref{table-key}), obtained by aggregating 15 runs of the algorithm with different initializations. As shown, even a random selection of support items gives fairly stable results. Moreover, it can be noted that the results of most non-random selection and RBE algorithms are several standard deviations higher.

\begin{table*}[h!]
    \caption{Average and standard deviation for AnnCUR over 15 launches.}
    \vspace{3pt}
    \label{table-stats}
    \centering
    \begin{small}
    \begin{tabular}{lccccccccc}
    \toprule
        \multicolumn{1}{c}{\bf }  &\multicolumn{1}{c}{\bf Yugioh} &\multicolumn{1}{c}{\bf P.Wrest.} &\multicolumn{1}{c}{\bf StarTrek} &\multicolumn{1}{c}{\bf Dr.Who} &\multicolumn{1}{c}{\bf Military} &\multicolumn{1}{c}{\bf RecGames2} &\multicolumn{1}{c}{\bf RecGames1}  &\multicolumn{1}{c}{\bf RecMusic}  &\multicolumn{1}{c}{\bf QA.Small} \\
    \toprule
avg & 0.4599 & 0.4232 & 0.2447 & 0.1875 & 0.2557 & 0.6747 & 0.5908 & 0.1492 & 0.5803 \\
std & 0.0081 & 0.0062 & 0.0118 & 0.0060 & 0.0037 & 0.0048 & 0.0087 & 0.0066 & 0.0030 \\

    \toprule
    \end{tabular}
    \end{small}
\end{table*}

\subsection{Selecting Support Elements by Popular Strategy}\label{sec:key-popular}

The results for AnnCUR + Popular are shown in the third row of Table~\ref{table-key}, the results for RBE + Popular for the ZeSHEL datasets are shown below in Table~\ref{table-popular-key}.

\begin{table*}[h!]
    \caption{Popular as support items selection; $\text{HitRate}(100)$ is reported (larger is better).}
    \vspace{3pt}
    \label{table-popular-key}
    \centering
    \begin{small}
\begin{tabular}{lccccc}
    \toprule
        \multicolumn{1}{c}{\bf }  &\multicolumn{1}{c}{\bf Yugioh} &\multicolumn{1}{c}{\bf P.Wrest.} &\multicolumn{1}{c}{\bf StarTrek} &\multicolumn{1}{c}{\bf Dr.Who} &\multicolumn{1}{c}{\bf Military} \\
    \toprule
AnnCUR + Popular (Table~\ref{table-key}) & 0.2429 & 0.3001 & 0.1154 & 0.1197 &	0.1907 \\
RBE + Popular & 0.2637 & 0.3161 & 0.1330 & 0.1207 & 0.2038 \\

    \toprule
    \end{tabular}
    \end{small}
\end{table*}

\section{Scalability Hints}~\label{scale}

\vspace{-10pt}

Let us consider separately the selection of the support elements, training, and inference.

\paragraph{Support items selection}
Since different clustering approaches have shown near-optimal quality, there are different options for scalable clustering:

\begin{itemize}
    \item  Clustering on downsampled datasets: for extremely large (and dense) datasets it is natural to expect that cluster structure could be inferred from a significantly smaller subsample (according to the authors' experience, on the data of ads recommender systems with billions of items and downsampling to millions, this is so). Even for our research (small) datasets, downsampling 75\% of the data does not lead to very significant performance drops (Table~\ref{table-key-downsampled}).

\item  Using data-driven clustering: as shown in the third row of Table~\ref{table-key} (for the RecGames datasets), choosing popular items from different categories/genres (in our case, the global top of popular items is almost uniformly diversified) works extremely well.

\item Using a distributed clustering algorithm.

\end{itemize}

\begin{table*}[h!]
\caption{Support items selection on downsampled data; $\text{HitRate}(100)$ is reported (larger is better).}
\label{table-key-downsampled}
\begin{center}
\begin{small}
\begin{tabular}{lccccc}
\toprule

 & \textbf{Yugioh} & \textbf{P.Wrest}. & \textbf{StarTrek} & \textbf{Dr.Who} & \textbf{Military} \\

\midrule

AnnCUR+Random    & 0.4724 & 0.4280 & 0.2287 & 0.1919 & 0.2455 \\
\midrule
AnnCUR+KMeans    & 0.5083 & 0.4850 & 0.3226 & 0.2517 & 0.3042 \\
AnnCUR+1/4KMeans & 0.5112 & 0.4685 & 0.3101 & 0.2514 & 0.2854 \\
AnnCUR+1/8KMeans & 0.5111 & 0.4694 & 0.3103 & 0.2450 & 0.2950 \\
\midrule
AnnCUR+$l_2$-greedy   & 0.5618 & 0.5119 & 0.3677 & 0.2960  & 0.3357 \\
AnnCUR+1/4$l_2$-greedy & 0.5529 & 0.4865 & 0.3582 & 0.2862 & 0.3270 \\
AnnCUR+1/8$l_2$-greedy & 0.5492 & 0.4712 & 0.3533 & 0.2822 & 0.3231 \\ 
\bottomrule
\end{tabular}
\end{small}
\end{center}
\vskip -0.1in
\end{table*}

\paragraph{Training}
The training of RBE does not significantly differ from any dual-encoder-like models (which are commonly used in production recommendation services) with relevance vectors as inputs. The only major difference is that relevances to fixed support items should be provided. Let us also note that efficient sampling of negative samples for the loss function should be used in order to train on such large datasets.

\paragraph{Inference}
Inference is also similar to dual-encoders: the item representations are precomputed and placed in the Approximate Nearest Neighbors index like HNSW, which accepts the embedding of the query as input. 

\end{document}